\newcommand{\ifndef}[2]{\ifthenelse{\isundefined{#1}}{#2}{}}
\newcommand{\mydef}[2]{\def#1{#2}}
\newcommand{\nospell}[1]{#1}  %
\newcommand{\myusepackage}[2][]{\@ifpackageloaded{#2}{} %
{\ifthenelse{\equal{}{#1}} {\usepackage{#2}} {\usepackage[#1]{#2}} }}
\ifndef{\theorem}{\newtheorem{theorem}{Theorem}}
{}
\ifndef{\lemma}{\newtheorem{lemma}[theorem]{Lemma}}
\ifndef{\corollary}{\newtheorem{corollary}[theorem]{Corollary}}
\ifndef{\conjecture}{}
\ifndef{\remark}{\theoremstyle{remark} }
\ifndef{\proposition}{}
\ifndef{\claim}{\newtheorem{claim}[theorem]{Claim}}
\ifndef{\result}{}
\ifndef{\problem}{}
{}  %
{}  %
\newtheoremstyle{mydefinition}  %
{\topsep}{\topsep}  %
{\slshape}  %
{}  %
{\bfseries}  %
{.}  %
{ }  %
{}  %
\newtheoremstyle{myremark}  %
{\topsep}{\topsep}  %
{\slshape}  %
{}  %
{\bfseries\slshape}  %
{:}  %
{ }  %
{}  %
\newtheoremstyle{myexample}  %
{\topsep}{\topsep}  %
{\itshape}  %
{}  %
{\slshape}  %
{:}  %
{ }  %
{\ul{\thmname{#1}}}  %
\newtheoremstyle{myclaims}  %
{\topsep}{\topsep}  %
{\slshape}  %
{}  %
{\bfseries\itshape}  %
{}  %
{ }  %
{\thmname{#1}\thmnumber{ \!#2}.}  %
\theoremstyle{myremark}}
\theoremstyle{mydefinition}}
\theoremstyle{myexample}}
\theoremstyle{myclaims}
\ifndef{\fact}{\newtheorem{fact}[theorem]{Fact}}
\newtheoremstyle{anystatement}{\topsep}{\topsep}{\itshape}{}{\bfseries}{.}{ }{\anystatementname}
{\theoremstyle{anystatement}}
\newcommand{\anystatementname}{}
\newcommand{\AuxNew}[4][]{#2{#3}[1][*]%
{\ifthenelse{\equal{*}{##1}}%
{\Ensuremath{#1{#4}}}%
{\ifthenelse{\equal{b}{##1}}%
{\Ensuremath{\mathbf{#4}}}%
{\ifthenelse{\equal{}{##1}}%
{\IfMathMode{#1{#4}}{#4}}{}}}}}
\newcommand{\newident}[3][*]{\ifthenelse{\equal{*}{#1}}%
{\AuxNew[\mathit]{\newcommand}{#2}{#3}}%
{\mydef{#2}{\Ensuremath{\mathit{#3}}}}}%
\newcommand{\newmat}[3][*]{\ifthenelse{\equal{*}{#1}}%
{\AuxNew{\newcommand}{#2}{#3}}%
{\mydef{#2}{\Ensuremath{#3}}}}%
\newcommand{\providemat}[3][*]{\ifthenelse{\equal{*}{#1}}%
{\AuxNew{\providecommand}{#2}{#3}}%
{\mydef{#2}{\Ensuremath{#3}}}}%
\newcommand{\providematarg}[2]{%
\providecommand{#1}[1][]{\Ensuremath{#2}}}
\newcommand{\newfunction}[2]{%
\newcommand{#1}[2][*]{\ifthenelse{\equal{*}{##1}}%
{\Ensuremath{#2{\left(##2\right)}}}%
{#2(##2)}}%
}
\newcommand{\MyMakeTheoMacros}[3]{
\newcommand{#2}[2][]{\ifthenelse{\equal{}{##1}}
{\begin{#1} ##2 \end{#1}}
{\begin{#1}\label{##1} ##2\end{#1}}}
\newcommand{#3}[3][]{\ifthenelse{\equal{}{##1}}
{\begin{#1}[##2] ##3 \end{#1}}
{\begin{#1}[##2]\label{##1} ##3\end{#1}}}
}
\newtheorem*{rep@theorem}{\rep@title}
\newcommand{\newreptheorem}[2]{%
\newenvironment{rep#1}[1]{%
\def\rep@title{#2 \ref{##1}}%
\begin{rep@theorem}}%
{\end{rep@theorem}}}
\newcommand{\MyMakeDupTheoMacros}[7]{
\MyMakeTheoMacros{#1}{#2}{#3}
\newreptheorem{#1}{#6}
\newcommand{#4}[3]{
\newcommand{##2}{##3}
\begin{#1}\label{##1} ##2\end{#1}}
\newcommand{#5}[4]{
\newcommand{##2}{##4}
\begin{#1}{\e{##3}}\label{##1} ##2\end{#1}}
\newcommand{#7}[2]{\begin{rep#1}{##1} ##2 \end{rep#1}}
}
\newcommand{\MyMakeRefMacros}[3]{\newcommand{#1}[2][]
{\ifthenelse{\equal{}{##1}}{#2~\ref{##2}}{#3~\ref{##1} and~\ref{##2}}}}
\newcommand{\MyMakeEqRefMacros}[3]{\newcommand{#1}[2][]
{\ifthenelse{\equal{}{##1}}{#2~\eqref{##2}}{#3~\eqref{##1} and~\eqref{##2}}}}
\newcommand{\bibentry}[8]{
{}\bibitem[\nospell{#8}]{#1} {\textup #3}.{}
\ifthenelse{\equal{}{#6}}
{\newblock \textrm{#4.} \newblock {\em #5}, #7.}
{\newblock \textrm{#4.} \newblock {\em #5, #6}, #7.}
}
\MyMakeRefMacros{\lemref}{Lemma}{Lemmas}
\MyMakeRefMacros{\crlref}{Corollary}{Corollaries}
\newtheorem*{prp*}{\e{Proposition}}
\MyMakeRefMacros{\prpref}{Proposition}{Propositions}
\MyMakeRefMacros{\clmref}{Claim}{Claims}
\MyMakeRefMacros{\theoref}{Theorem}{Theorems}
\MyMakeRefMacros{\defiref}{Definition}{Definitions}
\MyMakeRefMacros{\probref}{Problem}{Problems}
\MyMakeRefMacros{\conjref}{Conjecture}{Conjectures}
\renewcommand{\qedsymbol}{$\blacksquare$}
\newcommand{\prf}[2][]{\ifthenelse{\equal{}{#1}}%
{\begin{proof}\renewcommand{\qedsymbol}{$\blacksquare$}%
#2 \end{proof}}%
{\begin{proof}[Proof of #1]%
\renewcommand{\qedsymbol}{$\blacksquare_{\mbox{\it{\scriptsize{#1}}}}$}%
#2 \end{proof}\renewcommand{\qedsymbol}{$\blacksquare$}}%
}
\newcommand{\abstr}[1]{\begin{abstract} #1 \end{abstract}}
\newcommand{\sect}[2][]{\ifthenelse{\equal{}{#1}}
{\section{#2}}
{\section{#2}\label{#1}}}
\MyMakeRefMacros{\chref}{Chapter}{Chapters}
\MyMakeRefMacros{\sref}{Section}{Sections}
\MyMakeRefMacros{\ssref}{Subsection}{Subsections}
\MyMakeRefMacros{\sssref}{Subsection}{Subsections}
\MyMakeRefMacros{\figref}{Figure}{Figures}
\newcommand{\IfMathMode}[2]{\ifmmode{#1}\else{#2}\fi}
\newcommand{\Ensuremath}{\ensuremath}
\newcommand{\fbr}[1]{\IfMathMode%
{#1}{$#1$}}                     %
\newcommand{\fnbr}[1]{\mbox{\fbr{#1}}}  %
\newcommand{\fla}[2][*]{\ifthenelse{\equal{}{#1}}{\fbr{#2}}{\fnbr{#2}}}
\newcommand{\mat}[2][]{\ifthenelse{\equal{}{#1}}%
{ \begin{displaymath} #2 \end{displaymath} }%
{ \begin{equation} \label{#1} #2 \end{equation} }%
}
\newcommand{\malabel}[1]{\addtocounter{equation}{1}\tag{\theequation}\label{#1}}
\newcommand{\mal}[2][]{ %
\ifthenelse{\equal{}{#1}} %
{{\begin{align*} #2 \end{align*}}}  %
{\ifthenelse{\equal{P}{#1}}                %
{{\allowdisplaybreaks\begin{align*} #2           %
\end{align*}}} %
{{\begin{align*} \malabel{#1} #2 \end{align*}}}  %
} %
}
\newcommand{\m}{\mat}
\MyMakeEqRefMacros{\equref}{Equation}{Equations}
\MyMakeEqRefMacros{\expref}{Expression}{Expressions}
\MyMakeEqRefMacros{\inequref}{Inequality}{Inequalities}
\newcommand{\bracref}[1]{(\ref{#1})}
\newcommand{\bref}{\bracref}
\providecommand{\middle}{\big}
\newcommand{\h}[2][]{\ifthenelse{\equal{}{#2}}%
{\mathop h_{#1}}%
{\mathop h_{#1}{\left[{#2}\right]}}}
\newcommand{\hh}[3][]{\mathop h_{#1}%
{\left[{#2}\middle|\vphantom{|_1^1}{#3}\right]}}
\newcommand{\KL}[2]{d_{KL}\llp{#1}\middle\|\vphantom{|_1^1}{#2}\rrp}
\newcommand{\I}[3][]{\ifthenelse{\equal{}{#1}}%
{\mathbf{I}{\left[{#2}:\vphantom{|_1^1}{#3}\right]}}%
{\mathbf{I}{\left[{#2}:\vphantom{|_1^1}{#3}\middle|{#1}\right]}}}
\providecommand{\E}[2][]{\ifthenelse{\equal{}{#1}}%
{\mathop{\mathbf{E}}{\left[{#2}\right]}}%
{\mathop{\mathbf{E}}_{#1}{\left[{#2}\right]}}}
\newcommand{\PR}[2][]{\mathop{\mathbf{Pr}}_{#1}{\left[{#2}\right]}}
\newcommand{\ord}[1][]{\nospell{\ifthenelse{\equal{}{#1}}%
{\txt{'th}}%
{\ifthenelse{\equal{1}{#1}}{$1\txt{'st}$}{\ifthenelse{\equal{2}{#1}}{$2\txt{'nd}$}{\ifthenelse{\equal{3}{#1}}{$3\txt{'rd}$}{\fla{#1\txt{'th}}}}}}}}
\newcommand{\fr}[3][*]{%
\ifthenelse{\equal{*}{#1}}       %
{\frac{#2}{#3}}{}%
\ifthenelse{\equal{/}{#1}}       %
{\nicefrac{#2}{#3}}{}%
\ifthenelse{\equal{}{#1}}        %
{\left.#2\middle/#3\right.}{}%
\ifthenelse{\equal{p_}{#1}}      %
{\left.\left(#2\right)\middle/#3\right.}{}%
\ifthenelse{\equal{_p}{#1}}      %
{\left.#2\middle/\left(#3\right)\right.}{}%
\ifthenelse{\equal{pp}{#1}}      %
{\left.\left(#2\right)\middle/\left(#3\right)\right.}{}
}
\def\MySQRT#1#2{   %
\setbox0=\hbox{$#1\sqrt{#2\,}$}\dimen0=\ht0%
\advance\dimen0-0.2\ht0%
\setbox2=\hbox{\vrule height\ht0 depth -\dimen0}%
{\box0\lower0.4pt\box2}}
\newcommand{\set}[2][]{\ifthenelse{\equal{}{#1}}%
{\Ensuremath{\left\{#2\right\}}}%
{\Ensuremath{\left\{#2\middle|\vphantom{|_1^1}#1\right\}}}}
\newfunction{\asO}{O}
\newfunction{\aso}{o}
\newfunction{\asOm}{\Omega}
\newfunction{\astOm}{\tilde \Omega}
\newfunction{\asT}{\Theta}
\newcommand{\txt}[1]{\textrm{#1}}  %
\newcommand{\Cl}{\mathcal}  %
\DeclareMathAlphabet{\lowcal}{OT1}{pzc}{m}{it}
\providemat{\QQ}{\mathbb{Q}}
\providematarg{\NN}{\ifthenelse{\equal{}{#1}}%
{\mathbb{N}}%
{\mathbb{N}_{#1}}}
\newcommand{\ds}[1][]
{\ifthenelse{\equal{}{#1}}{\dots}{#1\dots#1}}
\newcommand{\itemi}[2][]{\ifthenelse{\equal{}{#1}}
{\begin{itemize} #2 \end{itemize}}
{\begin{itemize}[#1] #2 \end{itemize}}}
\newcommand{\MyComment}[1]{\ClassWarning{My Macros}{#1}}
\newcommand{\wrt}{w.r.t.\ }	%
\newcommand{\fn}{\footnote}
\newcommand{\e}{\emph}
{}  %
\providecommand{\ul}[1]{\underline{#1}} %
\renewcommand{\l}{\left}
\renewcommand{\r}{\right}
\newident{\EQ}{EQ}
\newident{\NE}{NE}
\title{Equality, Revisited}
\date{}
\newcommand{\instDG}{Institute of Mathematics, Academy of Sciences, \v Zitna 25, Praha 1, Czech Republic.}
\newcommand{\thanksDG}{Partially funded by the grant P202/12/G061 of GA \v CR and by RVO:\ 67985840.
Most of this work was done while DG was visiting the Centre for Quantum Technologies at the National University of Singapore, and was partially funded by the Singapore Ministry of Education and the NRF.
}
\newcommand{\instHK}{Division of Mathematical Sciences, Nanyang Technological University, Singapore \& Centre for Quantum Technologies, National University of Singapore, Singapore.
}
\newcommand{\thanksHK}{This work is funded by the Singapore Ministry of Education (partly through the Academic Research Fund Tier 3 MOE2012-T3-1-009) and by the Singapore National Research Foundation.}
\newcommand{\instRB}{Division of Mathematical Sciences, Nanyang Technological University, Singapore.}
\author{Ralph C. Bottesch\thanks{\instRB}\and Dmitry Gavinsky\thanks{\instDG\ \thanksDG}
\and Hartmut Klauck\thanks{\instHK\ \thanksHK}
}
{}
\begin{document}

\maketitle

\thispagestyle{empty}

\abstr{In 1979 Yao published a paper that started the field of communication complexity and asked, in particular, what was the randomised complexity of the Equality function (\EQ) in the Simultaneous Message Passing (SMP) model (for the question to be non-trivial, one must consider the setting of \e{private randomness}).
The tight lower bound \asOm{\sqrt n} was given only in 1996 by Newman and Szegedy.

In this work we develop a new lower bound method for analysing the complexity of \EQ\ in SMP.
Our technique achieves the following:
\itemi{
\item It leads to the tight lower bounds of \asOm{\sqrt n} for both \EQ\ and its negation \NE\ in the \e{non-deterministic} version of quantum-classical SMP, where Merlin is also quantum -- this is the strongest known version of SMP where the complexity of both \EQ\ and \NE\ remain high (previously known techniques seem to be insufficient for this).
\item It provides a unified view of the communication complexity of \EQ\ and \NE, allowing to obtain tight characterisation in all previously studied and a few newly introduced versions of SMP, including all possible combination of either quantum or randomised Alice, Bob and Merlin in the non-deterministic case.
Arguably, it also simplifies the previously known lower bound proofs.
}

Our characterisation also leads to tight trade-offs between the message lengths needed for players Alice, Bob, Merlin, not just the maximum message length among them, and highlights that \NE\ is easier than \EQ\  in the presence of classical proofs, whereas the problems have (roughly) the same complexity when a quantum proof is present.

We also construct new protocols for \EQ\ and \NE\ that achieve optimal trade-offs in the ``asymmetric'' scenarios when the (qu)bits from Merlin are either cheaper or more expensive than those from the trusted parties.
Along the way, we give tight analysis of a new primitive, where a honest classical and an untrusted quantum parties help a third party to obtain an approximate copy of a quantum state.
}

\newpage

\setcounter{page}{1}

\sect[intro]{Introduction}

The Equality function (\EQ) and the Simultaneous Message Passing model (SMP) are among the longest-studied objects in communication complexity:
When in 1979 Yao published his seminal paper~\cite{Y79_So_Co} that introduced communication complexity, \EQ\ was repeatedly used as an example (under the name of ``the identification function''), SMP was introduced (referred to as ``$1\to3\gets2$''), and determining the SMP complexity of \EQ\ was posed as an open problem.

Being one of the weakest models that has been studied in communication complexity, SMP is, probably, the most suitable for studying \EQ.
While in several natural variants of SMP the complexity of \EQ\ varies from constant to \asOm{\sqrt n}, in the most of more powerful setups \EQ\ can be solved by very efficient protocols of at most logarithmic cost.\fn
{The communication complexity of \EQ\ becomes $n$ in the most of \e{deterministic} models, but those results are usually trivial and we do not consider the deterministic setup in this work (except for one special case where a ``semi-deterministic'' protocol has complexity \asO{\sqrt n} -- that situation will be analysed in one of our lower bound proofs).}
That happens due to the fact that \EQ\ becomes very easy (even for SMP) in the presence of shared randomness, and virtually all commonly studied stronger models can emulate shared randomness at the cost of at most \asO{\log n} additional bits of communication.

The standard SMP setting involves three participants:\ the \e{players} Alice and Bob, and the \e{referee}; each of them can use private randomness.
The input consists of two parts, one is given to Alice and the other to Bob; upon receiving their parts, the players send one message each to the referee; upon receiving the messages, the referee outputs the answer.
Depending on the considered model, each player can be deterministic, randomised or quantum.
In the non-deterministic regime, there is a third player called Merlin, who knows both Alice's and Bob's parts of the input, and who sends his message to the referee, but is a dishonest person whose goal is to convince the referee to accept.\fn
{Note that Merlin's message is only seen by the referee, and not by Alice and Bob.
Letting the players receive messages from Merlin prior to sending their own messages would contradict the ``simultaneous flavour'' of the SMP model.
Practically, that would make \NE\ trivial; while the case of \EQ\ is less obvious, we believe that the techniques developed in this work would be useful there as well.}
All players are assumed to be computationally unlimited, and the cost of a protocol is the (maximum) number of (qu)bits that the players send to the referee.
Unless stated otherwise, we consider the \e{worst-case setting}, where a protocol should give correct answer on every possible input with probability at least $2/3$.

Yao's question about the SMP complexity of \EQ\ was answered seventeen years later, in 1996, by Ambainis~\cite{A96_Co_Co}, who gave a protocol of cost \asO{\sqrt n}, and by Newman and Szegedy~\cite{NS96_Pub}, giving the matching lower bound \asOm{\sqrt n}. Babai and Kimmel~\cite{EQalso} showed the same lower bound shortly afterwards using a simpler argument.
The above results address the ``basic'' version of SMP, where all the participants are randomised and no shared randomness is allowed (recall that otherwise SMP becomes ``too powerful'' for \EQ).

In 2001, Buhrman, Cleve, Watrous and de Wolf~\cite{BCWW01_Qu_Fi} considered the version of SMP with quantum players and gave a very efficient and surprising protocol solving \EQ\ at cost \asO{\log n}, and showed its optimality.
In 2008, Gavinsky, Regev and de Wolf~\cite{GRW08_Si_Co} studied the ``quantum-classical'' version of SMP, where only one of the players could send a quantum message, and they showed that the complexity of \EQ\ in that model was \asOm{\sqrt{n/\log n}} (which was, tight up to the multiplicative factor $\sqrt{\log n}$ by \cite{NS96_Pub}, and was improved to \asOm{\sqrt n} in \cite{KlauckPodder14}) .

{}
\subsection{Our results}
{}

In this work we revisit the question about the SMP complexity of the Equality function \EQ\ and its negation \NE\ (the two cases are different in the context of non-deterministic models).
We give a complete characterisation of the complexity of \EQ\ and \NE\ in a number of new non-deterministic SMP models corresponding to all possible combinations of either classical or quantum Alice, Bob, and Merlin.
Moreover, our characterisation also covers the ``asymmetric'' scenarios when the (qu)bits from (untrusted) Merlin are either cheaper or more expensive than those from the trusted parties.

Let us denote the {\it type} of a non-deterministic SMP model by three letters, like QRQ, RRQ, RRR etc., where the letter in the corresponding position determines whether, respectively, Alice's, Bob's or Merlin's message is quantum or randomised.
We will say that a protocol is ``$(a,b,m)$'' if Alice, Bob and Merlin sends at most $a$, $b$ and $m$ (qu)bits, respectively.

\begin{table}
\begin{center}
\begin{tabular}{|l|l|l|l|l|}
\hline
Type & Task & Lower bound & Non-trivial upper bound & Tight bound \\
of SMP &  & (assuming $ab=\aso n$) & (skipping $O$) & on $a+b+m$ \\
\hline
RRR
& \EQ & $m=\asOm n$ & $(\log n, \log n,n)$ & $\asT{\sqrt n}$\\
\cline{2-5}
& \NE & $m=\asOm{\fr[/]na+\fr[/]nb}$ & $(a,a,\fr[/]na)$ & $\asT{\sqrt n}$\\
\hline
QRR
& \EQ & $m=\asOm n$ & $(\log n, \log n,n)$ & $\asT{\sqrt n}$\\
\cline{2-5}
& \NE & $m=\asOm{\fr[/]na+\fr[/]nb}$ & $(a,a,\fr[/]na)$ & $\asT{\sqrt n}$\\
\hline
RRQ
& \EQ & $m=\asOm{\fr[/]na+\fr[/]nb}$ & $(a\log a, a\log a,\fr[/]na\cdot\log n)$ & $\asT{\sqrt n}$\\
\cline{2-5}
& \NE & $m=\asOm{\fr[/]na+\fr[/]nb}$ & $(a,a,\fr[/]na)$ & $\asT{\sqrt n}$\\
\hline
QRQ
& \EQ & $m=\asOm{\min\{\fr[/]na,\fr[/]nb\}}$ & $(\log n, b\log b,\fr[/]nb\cdot\log n)$ & $\asT{\sqrt n}$\\
\cline{2-5}
& \NE & $m=\asOm{\min\{\fr[/]na,\fr[/]nb\}}$ & $(\log n,b\log b,\fr[/]nb\cdot\log n)$ & $\asT{\sqrt n}$\\
\hline
\end{tabular}
\caption{Summary of results: complexity of $(a,b,m)$ protocols.}
\label{TA:res}
\end{center}
\end{table}
\vspace{-.2cm}
Our results for all possible types of protocols are summarised in Table~\ref{TA:res}.
As it was shown in~\cite{A96_Co_Co} that both \EQ\ and \NE\ can be computed without any message from Merlin if $ab=const\cdot n$ and $a,b\geq\log n$, we present our hardness results via lower bounds on $m$ as a function of of $a$ and $b$, assuming $ab=o(n)$.
Note that we are closing the gap left by~\cite{GRW08_Si_Co}, as our results about non-deterministic SMP complexity of \EQ\ imply that its quantum-classical complexity is also in \asOm{\sqrt n}. This result has also recently been obtained by Klauck and Podder \cite{KlauckPodder14}.

One of the key ingredients in our lower bound method is a tight analysis of a new communication primitive that we call \e{``One out of two''}, which might be of independent interest:
\itemi{
\item Alice receives $X_1$ and $X_2$, Bob receives $Y$.
\item It is promised that either $X_1=Y$ or $X_2=Y$.
\item The referee has to distinguish the two cases.
}
The problem closely resembles \EQ, but can be shown to be easier in some situations:
We show that its quantum-classical SMP complexity is \asOm{\sqrt n}, and here the interesting case is when Alice's message is quantum (the case of ``quantum Bob'' can be handled relatively easily by previously known techniques).
A new method developed for its analysis can be viewed as the main technical contribution of this work.

In all cases we are showing tightness of our bounds even for arbitrary trade-off between $a$, $b$ and $m$ (Table~\ref{TA:res}).
For that we demonstrate two protocols:\ one for \NE\ in RRR, and one for \EQ\ in QRQ and RRQ.
For the latter, we combine the protocol of~\cite{BCWW01_Qu_Fi} for \EQ\ with a new primitive in quantum communication, which might be of independent interest:

Assume that both Alice and Merlin know the classical description of a quantum state on $\log n$ qubits.
Another player, the referee, wants to obtain one approximate copy of this state.
Alice is trusted, but can send only classical information; Merlin can send quantum messages, but not be trusted.
For which message lengths $a$ and $m$ from, respectively, Alice and Merlin can this be achieved?
We call this problem \e{``Untrusted quantum state transfer''} and give a tight analysis, up to log-factors.
We show that, essentially, $am=\tilde{\Theta}(n)$ is both enough and required.

\section{Organisation}
The remainder of the paper is organised as follows: Section \ref{prelim_sect} contains the preliminaries for the rest of the paper. In Section \ref{s_method_sect} we present our new lower bound technique, which we further extend in Section \ref{exten_sect}. Our main results, lower bounds and matching upper bounds for \NE\ and \EQ\ in different versions of the SMP setting, are given in Sections \ref{lower_sect} and \ref{upper_sect}. Section \ref{uqst_sect} contains a detailed analysis of the ``Untrusted Quantum State Transfer'' problem, and based on this we sketch an RRQ protocol in Section \ref{rrq_protocol_sect}. Finally, unrelated to our results for \NE\ and \EQ, we give an RRR protocol for the Disjointness problem (Section \ref{disj_sect}).

\section{Preliminaries}\label{prelim_sect}
\subsection{Definitions of models}

We assume familiarity with communication complexity,
referring to~\cite{kushilevitz&nisan:cc} for more details about classical communication complexity
and~\cite{wolf:qccsurvey} for quantum communication complexity (for information about
the quantum model beyond what's provided in~\cite{wolf:qccsurvey}, see~\cite{nielsen&chuang:qc}).

In an SMP protocol there are players Alice, Bob, Merlin, who send messages to Referee. None of the parties share any public randomness or entanglement. Alice, Bob, Merlin send messages to Referee, who computes the function value.
Alice, Bob, Merlin never communicate with each other. We always allow private randomness to be used.

We will refer to SMP protocols by shorthands like QQ, QRQ, etc., which denote who is sending what type of message. E.g.~QR refers to a protocol in which Alice sends a quantum message and Bob a classical message. QRQ refers to the situation where Alice and Merlin send a quantum message, Bob a classical message. We do not consider the situation where Merlin sends a message, but either Alice or Bob do not, i.e., QQ, QR, RR refer to Alice and Bob alone.

An $(a,b,m)$-protocol is one in which the message lengths of player Alice, Bob, Merlin are $O(a), O(b), O(m)$ respectively.
So for example, we will consider QRQ $(\log n, \sqrt n\log n,\sqrt n\log n)$-protocols.

\subsection{Quantum states and the swap-test}

We refer to~\cite{nielsen&chuang:qc} for general quantum background.

The {\em trace norm} on matrices $A$ is defined by $||A||_t\deq \rm{Tr}\sqrt{AA^\dagger}$.
We will make use of the {\em trace distance} between two mixed quantum states $\rho,\sigma$, which is defined as $||\ \rho-\sigma||_t$.

Another popular measure of distinguishability between quantum states is the {\em fidelity}, which is defined by $F(\rho,\sigma)=||\sqrt\rho\sqrt\sigma||_t$.
Fuchs and van der Graaf~\cite{FuchsGraaf} show the following relation between the two.
\begin{fact}
For any two mixed states $\rho,\sigma$ we have
\[1-F(\rho,\sigma)\leq \frac{1}{2}||\rho-\sigma||_t\leq \sqrt{1-F^2(\rho,\sigma)}.\]
\end{fact}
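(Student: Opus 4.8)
The plan is to prove the two inequalities separately, using two standard characterisations of the fidelity.

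For the right-hand inequality $\frac{1}{2}||\rho-\sigma||_t\leq\sqrt{1-F^2(\rho,\sigma)}$, I would invoke Uhlmann's theorem: choose purifications $|u\rangle$ of $\rho$ and $|v\rangle$ of $\sigma$, living in a common space $\mathcal{H}\otimes\mathcal{H}'$, with the phases fixed so that $\langle u|v\rangle=F(\rho,\sigma)\geq 0$. Since the trace distance is contractive under the partial trace (a CPTP map), $||\rho-\sigma||_t\leq\big|\big|\,|u\rangle\langle u|-|v\rangle\langle v|\,\big|\big|_t$. The operator $|u\rangle\langle u|-|v\rangle\langle v|$ is supported on the at most two-dimensional span of $|u\rangle,|v\rangle$; writing it in an orthonormal basis of that span one checks that it is traceless with determinant $-(1-|\langle u|v\rangle|^2)$, hence has eigenvalues $\pm\sqrt{1-|\langle u|v\rangle|^2}$ and trace norm $2\sqrt{1-|\langle u|v\rangle|^2}=2\sqrt{1-F^2(\rho,\sigma)}$. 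Dividing by $2$ gives the claimed bound; as a by-product, equality holds here whenever $\rho$ and $\sigma$ are pure.

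For the left-hand inequality $1-F(\rho,\sigma)\leq\frac{1}{2}||\rho-\sigma||_t$, I would use the measurement characterisation of fidelity (Fuchs--Caves): there is a POVM $\{E_i\}$ for which $F(\rho,\sigma)=\sum_i\sqrt{p_iq_i}$, where $p_i=\mathrm{Tr}(E_i\rho)$ and $q_i=\mathrm{Tr}(E_i\sigma)$. Since $\sqrt{p_iq_i}\geq\min\{p_i,q_i\}$ for every $i$, we get $F(\rho,\sigma)\geq\sum_i\min\{p_i,q_i\}=1-\frac12\sum_i|p_i-q_i|$. Finally, the trace distance dominates the total variation distance of the outcome distributions of any measurement, $\sum_i|p_i-q_i|\leq||\rho-\sigma||_t$ — apply the Helstrom identity $\frac12||\rho-\sigma||_t=\max_{0\leq E\leq I}\mathrm{Tr}(E(\rho-\sigma))$ to the operator $\sum_{i:\,p_i\geq q_i}E_i$. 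Combining the two displayed bounds and rearranging yields the inequality.

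Both halves are short once the right tools are in place; the only genuine content is invoking Uhlmann's theorem for the first inequality and the existence of a fidelity-optimal measurement for the second, and these are the steps I would expect a reader to find least routine (both are themselves nontrivial theorems). The remaining ingredients — contractivity of the trace norm under partial trace and under measurements, the eigenvalue computation for a difference of two rank-one projectors, and the elementary estimate $\sqrt{pq}\geq\min\{p,q\}$ — are all straightforward. Since this is exactly the statement established by Fuchs and van der Graaf, one may alternatively just cite~\cite{FuchsGraaf}.
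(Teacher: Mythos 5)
Your proof is correct. Note, however, that the paper does not prove this statement at all: it is stated as a Fact and attributed to Fuchs and van de Graaf~\cite{FuchsGraaf}, so there is no in-paper argument to compare against. What you wrote is essentially the standard proof of the Fuchs--van de Graaf inequalities (as in the cited reference and in Nielsen--Chuang): the upper bound $\frac12\|\rho-\sigma\|_t\le\sqrt{1-F^2}$ via Uhlmann purifications, contractivity of the trace norm under partial trace, and the explicit eigenvalue computation $\pm\sqrt{1-|\langle u|v\rangle|^2}$ for the difference of two rank-one projectors; the lower bound $1-F\le\frac12\|\rho-\sigma\|_t$ via the Fuchs--Caves measurement attaining the fidelity, the elementary bound $\sqrt{p_iq_i}\ge\min\{p_i,q_i\}$, and the fact that the trace distance dominates the total variation distance of any measurement's outcome statistics (your use of the Helstrom maximisation with $E=\sum_{i:\,p_i\ge q_i}E_i$ is exactly right, since the minimising POVM gives $F=\sum_i\sqrt{p_iq_i}$, which is the direction the chain of inequalities needs). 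Both nontrivial ingredients you lean on (Uhlmann's theorem and the optimal-measurement characterisation of fidelity) are correctly flagged as the external theorems doing the work; for the purposes of this paper, simply citing~\cite{FuchsGraaf} is what the authors intended.
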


We need the main result of~\cite{BCWW01_Qu_Fi} regarding the so-called swap-test. The swap-test takes two quantum states and decides whether they are the same or not. While the original analysis is given only for pairs of pure states, an analysis in~\cite{multmerlion} can be rewritten to establish the following:

\begin{fact}
Given a pure state $|\phi\rangle$ with density matrix $\phi$ and a mixed state $\rho$, the swap-test will accept with probability exactly \[1/2+\frac{F^2(\phi,\rho)}{2}.\]
\end{fact}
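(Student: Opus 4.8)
The plan is to reduce the claim to the pure-state swap-test analysis of~\cite{BCWW01_Qu_Fi} by exploiting that the test's acceptance probability is an affine function of each of its two inputs, and then to compute a fidelity explicitly. I would carry this out in three steps.

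\textbf{Step 1 (affine extension to mixed inputs).} Recall the swap-test: it prepares a control qubit in $(|0\rangle+|1\rangle)/\sqrt2$, applies a controlled-SWAP to the two input registers, and measures the control in the Hadamard basis, accepting on outcome $|{+}\rangle$. For pure inputs $|\phi\rangle,|\psi\rangle$, the computation in~\cite{BCWW01_Qu_Fi} yields acceptance probability $\tfrac12+\tfrac12|\langle\phi|\psi\rangle|^2=\tfrac12+\tfrac12\operatorname{Tr}(\phi\psi)$. Since the whole procedure is a fixed quantum channel followed by a two-outcome measurement, the acceptance probability is a linear functional of the joint state of the two test registers; this is exactly the rewriting attributed to~\cite{multmerlion} in the excerpt. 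Fixing the first input to $\phi=|\phi\rangle\langle\phi|$ and writing $\rho=\sum_i p_i|\psi_i\rangle\langle\psi_i|$, linearity gives that the acceptance probability on input $(\phi,\rho)$ equals $\sum_i p_i\bigl(\tfrac12+\tfrac12|\langle\phi|\psi_i\rangle|^2\bigr)=\tfrac12+\tfrac12\langle\phi|\rho|\phi\rangle$.

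\textbf{Step 2 (identify the fidelity).} It then remains to show $\langle\phi|\rho|\phi\rangle=F^2(\phi,\rho)$. Since $\phi$ is a rank-one projector, $\sqrt\phi=\phi$, so $\sqrt\phi\sqrt\rho=|\phi\rangle\langle\phi|\sqrt\rho=|\phi\rangle\langle v|$ with $|v\rangle=\sqrt\rho\,|\phi\rangle$. The trace norm of a rank-one operator $|a\rangle\langle b|$ equals $\||a\rangle\|\cdot\||b\rangle\|$, hence $F(\phi,\rho)=\bigl\|\,|\phi\rangle\langle v|\,\bigr\|_t=\||v\rangle\|=\sqrt{\langle\phi|\sqrt\rho\sqrt\rho|\phi\rangle}=\sqrt{\langle\phi|\rho|\phi\rangle}$. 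Squaring and substituting into Step 1 gives acceptance probability exactly $\tfrac12+\frac{F^2(\phi,\rho)}{2}$, as claimed.

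\textbf{Main obstacle.} The only delicate point is justifying that the acceptance probability really is linear in $\rho$ — in particular when $\rho$ is not supplied as an explicit ensemble but arises, say, as the reduced state of a larger entangled register held elsewhere. I would handle this by purifying $\rho$, observing that the swap-test acts as the identity on the purifying system, and noting that the purification view and the ensemble view both produce the same linear functional $\sigma\mapsto\tfrac12+\tfrac12\langle\phi|\sigma|\phi\rangle$ on the second input; this legitimises the use of the pure-pure formula together with linearity. Everything else is a short direct calculation.
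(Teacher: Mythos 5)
Your proposal is correct: the paper states this Fact without giving a proof (it only remarks that the analysis in~\cite{multmerlion} can be rewritten to yield it), and your argument---linearity of the acceptance probability $\operatorname{Tr}\bigl(E(\phi\otimes\rho)\bigr)$ in the second input combined with the rank-one identity $F(\phi,\rho)=\bigl\|\,|\phi\rangle\langle\phi|\sqrt\rho\,\bigr\|_t=\sqrt{\langle\phi|\rho|\phi\rangle}$---is precisely the intended rewriting, with both steps checking out. The purification discussion at the end is harmless but unnecessary, since the acceptance probability is a fixed linear functional of the input density matrix however $\rho$ is produced.
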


{}
\subsection{The new lower bound technique}
{}\label{s_method_sect}

Let us illustrate the idea of our method by applying it in the usual (randomised) SMP regime.
Given a short ``candidate protocol'' we prove that it cannot solve \EQ, as follows:
Let $A(X)$ be the mapping (possibly, randomised) that Alice uses to create her message, and similarly for $B(Y)$.
The goal of the referee is to use the messages $A(X)$ and $B(Y)$ in order to distinguish between the cases $X=Y$ and $X\neq Y$.

The new method is is based on constructing a distribution $\mu$ on $\{0,1\}^n$ of high entropy, such that $X=Y$ chosen according to $\mu$ leads to messages $A(X),B(Y)$ with small mutual information ${\bf I}[A(X):B(Y)]$.
This implies that the referee's state on input $X=Y$ is close to his state when $X$ and $Y$ are chosen from the same distribution $\mu$ each, but independently.
Since independent $X$ and $Y$ satisfy $X\neq Y$ with high probability in the latter case (recall that $\mu$ has high entropy) and the referee cannot distinguish the two situations, he cannot decide \EQ.

Suppose Alice sends $k$ bits and Bob $l$ bits. Assume we create $100\cdot k$ independent samples $B_i$ of Bob's message (for the same input $X$, but using different random choices each time).
With respect to uniformly-random $X$,
\begin{eqnarray*}
k&\geq& {\bf I}[A(X):B_1(X),\ldots,B_{100k}(X)]\\
&=&\sum_i {\bf I}[A(X):B_i(X)|B_1(X),\ldots,B_{i-1}(X)].
\end{eqnarray*}

Hence, for some $i_0$, ${\bf I}[A(X):B_{i_0}(X)|B_1(X),\ldots,B_{i_0-1}(X)]\leq 0.01$.
We let $\mu$ be the distribution of $X$, conditioned on some fixed ``typical'' values $b_1\dc b_{i_0-1}$ of $B_1(X)\dc B_{i_0-1}(X)$.
Then the entropy of $\mu$ is at least $n-O(kl)$ and the information between Alice's and Bob's messages is small when $X=Y\sim\mu$ -- as required for our lower bound method to work.

Let us compare our technique with the one used in~\cite{EQalso} (and later in~\cite{GRW08_Si_Co}).
There the main idea was to amplify the success probability by making one player's message longer, until that message can be made deterministic -- thus forcing communication $\Omega(n)$ from that player's side.
An advantage of this was that it could be applied to any function; the main disadvantage for the special case of \EQ\ was that this approach was ``too demanding'', and could not work in the stronger modifications of SMP.
\\\\
Before going into the details of the proof, let us first look at the limitations of the previous techniques.
For that we can use the ``One out of two'' problem that was defined in the Introduction.
Note that this problem naturally arises in the situation where a prover is around: instead of providing a (quantum) proof that $X\neq Y$ to the referee, Merlin may just send a string $Z$ to any of the players (say, Alice), that is supposed to be the input string of the other player. Alice and Bob should cause the referee to accept if $X\neq Y$ and $Z=Y$, and the referee must certainly reject if $X=Y$, no matter what $Z$ is.

From the communication complexity perspective, the ``One out of two'' is easier than \EQ:
On the one hand, any protocol for \EQ\ can be used to solve ``One out of two''.
On the other hand, in the ``semi-deterministic'' model where Alice's message is deterministic and Bob's message is randomised, the complexity of \EQ\ is known to be \asOm{n}, while ``One out of two'' can be solved by the following protocol\fn
{This protocol is an adaptation of the private-coin protocol for \EQ\ given by Ambainis in~\cite{A96_Co_Co}.}
of cost \asO{\sqrt n} that gives the correct answer with probability at least $2/3$:
\itemi{
\item Alice and Bob fix an error-correcting code $C:\01^n\to\01^N$, where $N\leq\asO n$ and $C(x)$ differs from $C(y)$ on at least $N/3$ positions for every $x\ne y$.
Additionally, let $N=k^2$ for some $k\in\NN$.
\item Viewing the code-words of $C$ as $k\times k$ Boolean matrices, Bob chooses a uniformly random $i\in[k]$, and Alice chooses the smallest $j\in[k]$ such that $C(X_1)$ and $C(X_2)$ differ on at least $k/3$ positions of the \ord[j] row.
\item Bob sends $(i,b_i)$ and Alice sends $(j,a_j^{(1)},a_j^{(2)})$ to the referee, where $b_i$ is the \ord[i] column of $C(Y)$, and $a_j^{(1)}$ and $a_j^{(2)}$ are the \ord[j] rows of $C(X_1)$ and $C(X_2)$, respectively.
\item If the \ord[i] entries of $a_j^{(1)}$ and $a_j^{(2)}$ are different, then the referee answers ``$X_t=Y$'', where $t\in\12$ is such that the \ord[j] entry of $b_i$ equals the \ord[i] entry of $a_j^{(t)}$; otherwise, the referee outputs one of the two possible answers uniformly at random.
}

In all other variants of SMP considered in this work, the complexities of \EQ\ and ``One out of two'' are asymptotically equal.
Nevertheless, the above protocol can be blamed for the fact that a new lower-bound method is required in order to get the tight lower bound \asOm{\sqrt n} in the model of quantum-classical SMP (i.e., Alice is quantum and Bob is randomised).
The only previously known lower bound technique that we found applicable in this situation is the one used in~\cite{GRW08_Si_Co}, namely\itemi{
\item Let $c$ denote the communication cost of a classical-quantum SMP protocol that solves the problem.
Use the technique of Aaronson~\cite{A04_Lim} to construct another protocol that solves the same problem, where the quantum message of the original protocol is replaced by a deterministic one of length \asO{c^2\log c}.
\item Conclude that $c=\asOm{\sqrt{d/\log d}}$, where $d$ is the ``deterministic-randomised'' complexity of the communication problem under consideration.}
Obviously, in the case of the ``One out of two'' problem this only gives a bound \astOm{n^{1/4}}.
The argument can be improved (by considering the ``asymmetric'' version of the deterministic-randomised-deterministic model) to give a bound \astOm{n^{1/3}}, but that seems to be the best possible. An \astOm{n^{1/3}} bound can also be derived directly from the ``Merlin removal'' theorem by Aaronson~\cite{Aaronson06qma/qpoly} combined with the above idea of making the quantum player deterministic.

In this work we circumvent this obstacle posed by the fact that the ``One out of two'' problem does not need a long message from a deterministic Alice by developing a new lower bound method for \EQ\ in the SMP model.
The new argument is very robust -- in particular, in its basic form it can be used to show the \asOm{\sqrt n} lower bound for \EQ\ in both classical and quantum-classical versions of SMP, and its more involved version can be used to obtain the tight bound for the ``One out of two'' problem, as well as for \EQ\ in various non-deterministic SMP settings.

To introduce our method, let us use it to give a new proof that \EQ\ has complexity \asOm{\sqrt n} in the basic SMP model, where both Alice and Bob send a randomised classical message to the referee.

Let $\Cl{P}$ be a protocol, where Alice sends the message $\alpha(X,R_A)$ when her input is $X$ and the random string is $R_A$, and Bob sends $\beta(Y,R_B)$ when his input is $Y$ and the random string is $R_B$.
Let $a$ and $b$ denote the lengths of Alice's and Bob's messages, respectively. First we let both $X$ and $Y$ come from the uniform distribution on $\{0,1\}^{n}$, modulo $X=Y$.

We are interested in the scenario where upon receiving $Y$, Bob produces a sequence of messages $\beta(Y,R_1),\beta(Y,R_2),\dots$ for mutually independent random values $R_i$.
For all $i\in\NN$, let $B_i$ be the random variable that takes the value $\beta(Y,R_i)$, and let $A$ take the value $\alpha(X,R_A)$ -- i.e., $B_i$ is the \ord[i] element in the sequence of $B$'s messages and $A$ is Alice's message.
Define:
\m{p_i\deq\PR[B_1'\dc B_{i-1}']
{\I[B_1=B_1'\dc B_{i-1}=B_{i-1}']{B_i}A>\gamma_0},}
where $\gamma_0$ is a small constant, and the variables $B_1'\dc B_{i-1}'$ are distributed like $B_1\dc B_{i-1}$, respectively.

Observe that
\m{\I[B_1\dc B_{i-1}]{B_i}A>p_i\tm\gamma_0,}
and therefore by the chain rule,
\m{\I{B_1\dc B_i}A>\gamma_0\tm\sum_{j=1}^ip_i.}
On the other hand, for every $i\in\NN$ it holds that
\m{\I{B_1\dc B_i}A\le\h A\le a,}
and so,
\m{\gamma_0\tm\sum_{j=1}^\infty p_i\le a.}
In particular, there exists $t_0=\asO{a}$, such that
\m[m_t0]{\I[B_1=B_1'\dc B_{t_0}=B_{t_0}']{B_{t_0+1}}A<\gamma_0}
holds with probability at least $2/3$ \wrt randomly chosen $B_1'\dc B_{t_0}'$.

To conclude, we need the following technical claim:
\clm[c_Eh]{Let $X$ be a random variable that is uniformly distributed over $\Cl X$, and let $E$ be a Boolean random variable with $\E E>0$.
Then
\m{\log\l(\fr1{\E E}\r)>\log(|\Cl X|)-\hh X{E=1}-3.}}

\prf[\clmref{c_Eh}]{Let $\nu\deq\log(|\Cl X|)$, then
\m{\nu=\h X\le\nu\tm(1-\E E)+\hh X{E=1}\tm\E E+\h E,}
and therefore,
\m[m_Eh1]{\E E\tm(\nu-\hh X{E=1})\le\h E.}
W.l.g.\ we may assume that $\hh X{E=1}<\nu-3$, and so,
\m[m_Eh2]{\E E<\fr13.}

From \bref{m_Eh1} and \bref{m_Eh2},
\mal{\E E\tm(\nu-\hh X{E=1})&\le\h E\\
&\le\E E\tm\log\l(\fr1{\E E}\r)+\log\l(\fr1{1-\E E}\r)\\
&\le\E E\tm\log\l(\fr1{\E E}\r)+\log\l(1+2\E E\r)\\
&<\E E\tm\l(\log\l(\fr1{\E E}\r)+3\r),}
and the result follows.}

It follows that for every $\beta_1\dc \beta_{t_0}$,
\m{\PR{B_1=\beta_1\dc B_{t_0}=\beta_{t_0}}<\fr1{2^{n-\hh Y{B_1=\beta_1\dc B_{t_0}=\beta_{t_0}}-3}},}
and therefore,
\m[m_t1]{\hh Y{B_1=B_1'\dc B_{t_0}=B_{t_0}'}\ge n/2-3}
holds with probability at least $1-2^{t_0\tm b-n/2}$ \wrt randomly chosen $B_1'\dc B_{t_0}'$.

Now assume towards contradiction that $a\tm b=\aso n$.
Then \bref{m_t0} and \bref{m_t1} simultaneously hold with probability $2/3-\aso1$; let $\beta_1\dc \beta_{t_0}$ be any values, such that both the condition hold when $B_1=\beta_1\dc B_{t_0}=\beta_{t_0}$.
Denote by $\mu_0$ the distribution of $Y$ conditioned on ``$B_1=\beta_1\dc B_{t_0}=\beta_{t_0}$''.

Let us consider the behaviour of $\Cl{P}$ when $X\sim\mu_0$ and $Y\sim\mu_0$, either independently or modulo the condition ``$X=Y$''.
If $X$ and $Y$ are independent, then the messages received by the referee are also mutually independent.
If $X=Y$, then the messages no longer have to be independent, even though their marginal distributions are the same.
Denote by $\sigma_A$ the distribution of Alice's message, by $\sigma_B$ that of Bob's message, and by $\sigma_{AB}$ the joint distribution of the two messages when $X=Y$ (note that when $X$ and $Y$ are independent, their joint distribution is $\sigma_A\times\sigma_B$).

Note that when $X$ and $Y$ are independent, the right answer to \EQ\ should be ``$X\ne Y$'' with probability $1-\aso1$ -- this follows from \bref{m_t1}.
Therefore, up to the additive \aso1-factor, the referee's ability to answer correctly equals its ability to distinguish between the distributions $\sigma_{AB}$ and $\sigma_A\times\sigma_B$ of the messages coming from the players.
From \bref{m_t0},
\m{\gamma_0>\I[(A,B)\sim\sigma_{AB}]AB
=\KL{\sigma_{AB}}{\sigma_A\times\sigma_B}
\ge\fr2{\ln 2}\l(d(\sigma_{AB},\sigma_A\times\sigma_B)\r)^2 ,}
where $\KL\dt\dt$ and $d(\dt,\dt)$ denote, respectively, the Kullback-Leibler divergence and the total variation distance, and the last inequality is Pinsker's inequality.
As $\gamma_0$ can be made arbitrarily small, the protocol $\Cl{P}$ makes an error with probability $1/2-\aso1$ (unless $a\tm b=\asOm n$), which completes our proof.

\section{Extensions}\label{exten_sect}
\subsection{QR protocols}

Note that in the proof sketched above we never utilise the classical nature of Alice's message.
The reason why we need Bob to be classical
is that in order to construct the distribution $\mu$ we interpret
$\I[B_1=B_1'\dc B_{t_0}=B_{t_0}']{B_{t_0+1}}A$ as an expectation over fixing the random variables in the condition, something that is impossible in the quantum case.
Alice's message is not required to be classical, and the whole argument goes through unchanged for QR protocols,
where we use the quantum version of Pinsker's inequality (\cite{qpinsker}, see also~\cite{kntz:qinteraction}) in the last step.

Another generalisation is to extend the lower bound to a model, where Alice and the referee share entanglement (no other two players share entanglement). Without loss of generality (at the expense of a factor of two in the message length) in this case Alice can replace her quantum message by a classical message through teleportation. Nevertheless we can argue that the referee's part of the entangled state plus the message from Alice together carry little information about Alice's input, and leave the remaining argument unchanged. So even in this potentially stronger model we obtain the same $\Omega(\sqrt n)$ lower bound.

\subsection{The \e{``One out of two''} problem}

We are now going to extend the lower bound to the ``One out of two'' problem described in Section~\ref{intro}.
Recall that now Alice (the quantum player) has two inputs $X_1,X_2$, and the task is to decide which one is equal to Bob's input $Y$.
To show that this problem is hard, we will assume, towards contradiction, that the protocol is short and construct a distribution $\rho_1$ on triples $X_1,X_2,Y$, where $X_1=Y$ and $X_2$ is independently distributed like the marginal of $\rho_{1}$ on $Y$, and the messages of Alice and Bob (in the given protocol) are almost independent.
Let $\rho_2$ be a modification of $\rho_1$ with the roles of $X_1$ and $X_2$ reverses, then by a variation of the previous argument, the referee cannot distinguish between these two distributions of input.
Like before, the lower bound will follow by noticing that the two distributions are (mostly) supported on input values leading to the opposite answers to the ``One out of two'' problem.

Before defining the distributions on triples of binary strings, we first define a distribution on binary strings. For any $x_{0}\in\{0,1\}^{n}$, let $B_i$ be random variables taking values $\beta(x_0,R_i)$, for random strings $R_i$, with $i$ running from 1 to some value $t$. Choose $t_0$ uniformly at random between 1 and $t$. We then  take $\mu$ as the distribution obtained by restricting the uniform distribution on $\{0,1\}^{n}$ to the event $B_{1}=\beta_{1},\ldots,B_{t_{0}}=\beta_{t_{0}}$, for some $\beta_{1},\ldots,\beta_{t_{0}}\in\{0,1\}^{n}$ (note that this distribution depends on $t_{0}$ and the $\beta_{i}$'s, for which suitable values will be chosen later).

Now let $X_{0}$, $X$ and $Z$ be pairwise independent random variables following the distribution $\mu$. Let $\rho_{1}$ be the distribution of the vector random variable $(X_{1},X_{2},Y)$ when $X_{1}=X_{0}$, $X_{2}=X$ and $Y=X_{0}$. Similarly, let $\rho_{2}$ be the distribution of $(X_{1},X_{2},Y)$ when $X_{1}=X$, $X_{2}=X_{0}$ and $Y=X_{0}$. And finally, let $\rho$ be the distribution of $(X_{1},X_{2},Y)$ when $X_{1}=Z$, $X_{2}=X$ and $Y=X_{0}$. In the case of $\rho$, we simply see three independent samples under $\mu$, meaning that the three random variables will usually have pairwise different values, and hence inputs drawn according to $\rho$ will almost certainly fall outside of the promise of the problem we study. In the case of $\rho_2$ and $\rho_1$, two of the three strings are copies (drawn from $\mu$), while the third is independent but also from $\mu$.

Our goal now is to show that if $\beta(x_{0},R_{i})$ represents the message produced by Bob on input $x_{0}$ using the random string $R_{i}$, then for some choice of suitable values for $t_{0}$ and the $\beta_{i}$'s, the messages produced by Alice and Bob on $\rho_1$ and on $\rho$ are very similar, and that by symmetry the same holds for the messages on $\rho_2$ and on $\rho$. This implies that the referee can not distinguish between $\rho_1$ and $\rho_2$ (which should almost certainly lead to different outputs), hence the error is close to 1/2.
\\\\
We now fill in the details of the above proof outline. In what follows we take $X_{0}$ and $X$ as random variables following the uniform distribution on $\{0,1\}^{n}$, but always condition on events of the type $B_1=\beta_1\ldots,B_k=\beta_k$, thereby obtaining variables following distributions that are similar to $\mu$ (and later get $\mu$ by fixing suitable values for $t_{0}$ and the $\beta_{i}$'s).

We claim that, with high probability, the message distribution on $\rho_1$ and on $\rho$ are close to each other.

\m{p_i\deq\PR[B_1'\dc B_{i-1}']
{\I[B_1=B_1'\dc B_{i-1}=B_{i-1}']{B_i(X_0)}{A(X_0,X)}>\gamma_0},}
where $\gamma_0$ is a small constant, and the variables $B_1'\dc B_{i-1}'$ are distributed like $B_1\dc B_{i-1}$, respectively. The dependence on $X,X_0$ is highlighted. Note that the probability is only over the $B_j'$, while $X_0,X$ are still unfixed random variables.

Observe that
\m{\I[B_1\dc B_{i-1}]{B_i(X_0)}{A(X_0,X)}>p_i\tm\gamma_0,}
and therefore by the chain rule,
\m{\I{B_1(X_0)\dc B_i(X_0)}{A(X_0,X)}>\gamma_0\tm\sum_{j=1}^ip_i.}
On the other hand, for every $i\in\NN$ it holds that
\m{\I{B_1(X_0)\dc B_i(X_0)}{A(X_0,X)}\le\h A\le a,}
and so,
\m{\gamma_0\tm\sum_{j=1}^\infty p_i\le a.}
In particular, for most $t_0\leq 100a$
\m[m_t2]{\I[B_1=B_1'\dc B_{t_0}=B_{t_0}']{B_{t_0+1}(X_0)}{A(X_0,X)}<\gamma_0}
holds with probability at least $2/3$ \wrt randomly chosen $B_1'\dc B_{t_0}'$.

Again, for every $\beta_1\dc \beta_{t_0}$,
\m{\PR{B_1=\beta_1\dc B_{t_0}=\beta_{t_0}}<\fr1{2^{n-\hh Y{B_1=\beta_1\dc B_{t_0}=\beta_{t_0}}-3}},}
and therefore,
\m[m_t3]{\hh Y{B_1=B_1'\dc B_{t_0}=B_{t_0}'}\ge n/2-3}
holds with probability at least $1-2^{t_0\tm b-n/2}$ \wrt randomly chosen $B_1'\dc B_{t_0}'$.

Now assume towards contradiction that $a\tm b=\aso n$.
Then \bref{m_t2} and \bref{m_t3} simultaneously hold with probability $2/3-\aso1$; let $\beta_1\dc \beta_{t_0}$ be any values, such that both the condition hold when $B_1=\beta_1\dc B_{t_0}=\beta_{t_0}$.

Let us consider the behaviour of the protocol $\Cl{P}$ on $\rho_1$ and on $\rho$. Denote the message state of Alice and Bob by $\sigma_1$ resp.~$\sigma$.

On $\rho$ the messages received by the referee are mutually independent, i.e., $\sigma$ is a product state (Alice's message is a quantum state, and Bob's message is classical. The messages are neither entangled nor correlated with each other).
On $\rho_1$, the messages of Alice and Bob are (usually) not independent. The marginal states of Alice's message under $\rho$ and under $\rho_1$ are equal, and the same holds for Bob. That means that $\sigma$ is the product of the marginal states of $\sigma_1$.

From \bref{m_t2},
\m{\gamma_0>\I[(A(X_0,X),B(X_0))\sim\sigma_{1}]AB
=\KL{\sigma_{1}}{\sigma}
\ge\fr2{\ln 2}||\sigma_{1}-\sigma||_t^2.}
Now, by symmetrically using the same argument we also get
\m{\gamma_0>\I[(A(X_0,X),B(X_0))\sim\sigma_{2}]AB
\ge\fr2{\ln 2} ||\sigma_{2}-\sigma||_t^2 .}

Via the triangle inequality we get that $||\sigma_1-\sigma_2||_t\leq O(\sqrt{\gamma_0})$. Note that we can use the same $t_0$ for both estimates, since most $t_0$ are good.
Again, as $\gamma_0$ can be made arbitrarily small, the protocol $\Cl{P}$ makes an error with probability $1/2-\aso1$ (unless $a\tm b=\asOm n$), which completes our proof.

\theodup{the:oot}{\myTheoOOT}{In any QR protocol for the ``One out of two'' problem, the product of the message lengths must be $\Omega(n)$.}

We note that there is a protocol for the ``One out of two'' problem as described in Section~\ref{s_method_sect}, in which Alice sends a deterministic message and Bob a randomised message, both of length $O(\sqrt n)$, and indeed other trade-offs of the form $ab=O(n)$ are possible by arranging the inputs as non-square matrices.

\section{Lower Bounds}\label{lower_sect}
\subsection{Bounds for \NE}

\begin{theorem}\label{the:QRQ}
Any QRQ protocol for \NE, in which the message lengths of Alice, Bob, Merlin are $a,b,m$ satisfies
$b(a+m)\geq\Omega(n)$.
\end{theorem}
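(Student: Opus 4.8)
The plan is to reduce the ``One out of two'' problem to \NE\ in the QRQ model and then invoke \theoref{the:oot}. The guiding intuition is that an untrusted \emph{quantum} proof of non-equality can be absorbed into Alice's quantum message: the most natural way for Merlin to certify that $X\ne Y$ is to present a string equal to $X$ but different from $Y$, and telling which of two given strings equals $Y$ is precisely ``One out of two''. So a cheap QRQ protocol for \NE\ should yield a cheap QR protocol for ``One out of two'' in which Alice's message grows only by Merlin's contribution, whence the product $(a+m)\cdot b$ must be large.

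Concretely, suppose $\Cl P$ is a QRQ protocol for \NE\ with message lengths $a,b,m$ for Alice, Bob, Merlin. Let $\alpha(u)$ be the $a$-qubit message Alice sends in $\Cl P$ on input $u$, let $\beta(v)$ be the $b$-bit message Bob sends on $v$, and for $u\ne v$ let $\mu(u,v)$ be an honest Merlin message on $(u,v)$, i.e.\ one making the referee of $\Cl P$ accept with probability at least $2/3$. I would build the following QR protocol $\Cl Q$ for ``One out of two'': Alice, holding $(X_1,X_2)$, sends the $(a+m)$-qubit state $\alpha(X_2)\otimes\mu(X_2,X_1)$; Bob, holding $Y$, sends $\beta(Y)$; the referee splits Alice's message into its length-$a$ and length-$m$ parts, feeds them together with Bob's message to the referee of $\Cl P$, and outputs ``$X_1=Y$'' if $\Cl P$ accepts and ``$X_2=Y$'' otherwise. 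For correctness one may assume $X_1\ne X_2$ (otherwise both answers are admissible, and in particular $\mu(X_2,X_1)$ is well defined). If $X_1=Y$, hence $X_2\ne Y$, then $\Cl P$ is being run on the yes-instance $(X_2,Y)$ with $\mu(X_2,X_1)=\mu(X_2,Y)$ -- an honest proof for exactly this instance -- in the Merlin register, so it accepts with probability at least $2/3$; if instead $X_2=Y$, hence $X_1\ne Y$, then $\Cl P$ is being run on the no-instance $(Y,Y)$, so by soundness it rejects with probability at least $2/3$ whatever the Merlin register contains. Hence $\Cl Q$ is correct with probability at least $2/3$; since Alice's message has length $a+m$ and Bob's has length $b$, \theoref{the:oot} gives $(a+m)\,b=\asOm n$, which is the claim.

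The technical heart is already in \theoref{the:oot}, which this argument merely transports; what the proof adds is the reduction, and the points that want a line of care are the following. First, Alice must be able to prepare $\mu(X_2,X_1)$ -- fine, as she is computationally unbounded, knows both $X_1$ and $X_2$, and Merlin's optimal message on each yes-instance may be taken to be a fixed state. Second, the referee's splitting of Alice's message into the two $\Cl P$-registers is legitimate because the lengths $a$ and $m$ are fixed by $\Cl P$ in advance. Third -- and this is the step I'd single out as the one a careless reduction gets wrong -- the two cases invoke the two guarantees of $\Cl P$ in opposite ways: the case $X_1=Y$ uses completeness, but only after observing that the message fed to the Merlin register is genuinely the honest proof for the instance actually being executed, whereas the case $X_2=Y$ uses soundness in its full ``for every Merlin message'' form, against the possibly unnatural message $\mu(Y,X_1)$ landing on a no-instance. (Alternatively one could re-run the argument behind \theoref{the:oot} with Merlin's register appended to Alice's, but the reduction above is cleaner.)
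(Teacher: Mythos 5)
Your reduction is correct and is essentially the paper's own proof: the paper likewise has Alice absorb Merlin's optimal quantum proof for the instance in which Bob's input is the ``claimed'' string, obtaining a QR $(a+m,b)$-protocol for the ``One out of two'' problem and then invoking the \asOm n bound on the product of message lengths (your version merely swaps the roles of $X_1$ and $X_2$, which is immaterial). The case analysis you flag --- completeness when the appended proof matches the executed yes-instance, soundness in its ``for every Merlin message'' form on the no-instance --- is exactly the argument given there.
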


\begin{corollary}
For QRQ protocols, if $ab=o(n)$, then $bm=\Omega(n)$, and indeed $m=\Omega(n/b)\geq\Omega(\min\{n/a,n/b\})$ as claimed in Table 1.
\end{corollary}

\begin{proof}[Proof of Theorem \ref{the:QRQ}]
Take any QRQ $(a,b,m)$-protocol for \NE. We show that this implies a QR $(a+m,b)$-protocol for the ``One out of two'' problem and we are done by Theorem~1.
Instead of sending a quantum proof/message of length $m$ to the referee we let Merlin simply provide Alice with a string $Z$ claimed to be equal to $Y$, Bob's real input. Alice can now provide Referee with her own message (depending on $X$ only) and the proof $\rho_{X,Z}$ that would maximise Merlin's success probability if Bob's input is really $Z$. Clearly the communication is $a+m$ from Alice, and still $b$ from (unchanged) Bob.

The new protocol is as good as the previous at distinguishing $X\neq Y$ and $X=Y$: in both situations its maximum acceptance is the same as in the QRQ protocol.

We claim that the new protocol also solves the ``One out of two'' problem: given inputs $X,Y,Z$ and the promise that $X=Y$ or $Z=Y$ (and $X\neq Z$) then in the first case the protocol will reject, in the second case accept (with high probability).

Hence by Theorem~1 , we get $b(a+m)\geq\Omega(n)$.
\end{proof}

The following corollary is easy by symmetry.
\begin{corollary}
Any RRQ or QRR $(a,b,m)$-protocol for $\NE\ $ satisfies $b(a+m),a(b+m)\geq\Omega(n)$, and hence when $ab=o(n)$ we have $m\geq\Omega(\max\{n/a,n/b\})$.
\end{corollary}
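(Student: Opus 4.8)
The plan is to derive the corollary from Theorem~\ref{the:QRQ} (in fact from the reduction inside its proof), together with Theorem~\ref{the:oot} and, for one case, the lower bound for ``One out of two'' with a \emph{quantum one-input player}, which -- as noted in the introduction -- follows from previously known techniques and gives the analogous product bound $\Omega(n)$. The common tool is the reduction used to prove Theorem~\ref{the:QRQ}: given an SMP protocol for \NE\ in which Merlin's message has length $m$ and one trusted player sends a message of length $\ell$, let Merlin instead hand the witness $Z$ (claimed to equal the other trusted player's input) to that player, who then appends to her own message the witness $\rho$ that a best Merlin would send on the input pair she now holds. This produces a protocol for ``One out of two'' in which that player holds the two inputs and sends $\ell+m$ (qu)bits -- quantum if either her original message or $\rho$ is quantum, classical otherwise -- while the other player is untouched; correctness is exactly as in the proof of Theorem~\ref{the:QRQ}.

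For RRQ both trusted players are classical but Merlin is quantum, so whichever of Alice, Bob receives $Z$ becomes a \emph{quantum} two-input player while the other stays a classical one-input player, and Theorem~\ref{the:oot} applies directly. Handing $Z$ to Alice gives a QR $(a+m,b)$-protocol for ``One out of two'', hence $b(a+m)=\Omega(n)$; handing $Z$ to Bob (who then plays the quantum two-input role) gives a QR $(b+m,a)$-protocol, hence $a(b+m)=\Omega(n)$.

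For QRR Merlin's witness is classical. Handing it to the quantum player Alice again yields a ``One out of two'' protocol with a quantum two-input player of message length $a+m$ and a classical one-input player of length $b$, so Theorem~\ref{the:oot} gives $b(a+m)=\Omega(n)$. The remaining bound $a(b+m)=\Omega(n)$ is the one place where Theorem~\ref{the:oot} is not directly usable: handing $Z$ to the classical player Bob produces a ``One out of two'' protocol whose two-input player (Bob) is classical of length $b+m$ and whose one-input player (Alice) is quantum of length $a$; here we invoke the quantum-one-input-player lower bound mentioned above to get $(b+m)a=\Omega(n)$. I expect this to be the only non-routine step, precisely because QRR -- unlike RRQ -- is not symmetric under swapping Alice and Bob and so cannot be reduced to Theorem~\ref{the:QRQ} by relabelling.

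Finally, the trade-off claim is immediate arithmetic: when $ab=o(n)$, $b(a+m)\geq cn$ forces $bm=\Omega(n)$, i.e.\ $m=\Omega(n/b)$, and symmetrically $m=\Omega(n/a)$, so $m=\Omega(\max\{n/a,n/b\})=\Omega(n/a+n/b)$.
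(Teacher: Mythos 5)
Your proposal is correct and follows the same route as the paper: absorb Merlin's witness into one of the trusted players' messages (that player appends the optimal proof to her own message), observe that this yields a protocol for the ``One out of two'' problem, and invoke the lower bound for that problem; the trade-off claim is then the same arithmetic. The one point of divergence is the QRR bound $a(b+m)\geq\Omega(n)$, where you are in fact more careful than the paper. The paper's proof merely says that when Merlin's classical message is absorbed by the classical player Bob ``the protocol stays of type QR''; but, as you correctly note, the resulting ``One out of two'' protocol has a \emph{classical} two-input player and a \emph{quantum} one-input player, which is not the orientation covered by Theorem~\ref{the:oot} --- its proof explicitly requires the one-input player to be classical so that one can condition on his message values to build $\mu$. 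Your fix, invoking the ``quantum one-input player'' bound, is the right move, but be aware that the paper only alludes to that case in the introduction as following ``relatively easily by previously known techniques'' and never states or proves the corresponding product bound $\Omega(n)$; so this step of your argument (and, implicitly, of the paper's) rests on a claim that is not established anywhere in the text.
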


\begin{proof}
The RRQ case is trivial since, for the same protocol, we can use the above argument with the role of Alice and Bob exchanged, and not exchanged (they both send classical messages), leading to lower bounds $b(a+m)\geq\Omega(n)$ and $a(b+m)\geq\Omega(n)$.
For the QRR case this can also be done: now Merlin's message can be sent by either the classical player or the quantum player, since the message itself is not
quantum, and in any case the protocol stays of type QR.
\end{proof}

\subsection{Bounds for \EQ}

\begin{theorem}\label{the:QRQeq}
Any QRQ protocol for \EQ, in which the message lengths of Alice, Bob, Merlin are $a,b,m$ satisfies
$b(a+m)\geq\Omega(n)$.
\end{theorem}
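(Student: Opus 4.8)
The plan is to mimic the reduction used in the proof of Theorem~\ref{the:QRQ} for \NE, adapting it to \EQ. Starting from any QRQ $(a,b,m)$-protocol for \EQ, I would again have Merlin, instead of sending a quantum proof of length $m$ to the referee, simply send a string $Z$ (claimed to equal Bob's input $Y$) to Alice; Alice then sends the referee her usual message together with the optimal proof state $\rho_{X,Z}$ that Merlin would have sent if Bob's input were really $Z$. This produces a QR protocol in which Alice sends $a+m$ (qu)bits and Bob still sends $b$ bits, and in which the maximum acceptance probability, for any fixed inputs, equals that of the original QRQ protocol on inputs $X,Y$ with Merlin's best proof for the hypothesis ``$Y=Z$''.

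Next I would argue that this new QR protocol solves the ``One out of two'' problem. Under the promise $X\neq Z$ and ($X=Y$ or $Z=Y$): if $Z=Y$, then Alice's supplied proof $\rho_{X,Z}$ is exactly the proof Merlin would legitimately use when Bob's input is $Z=Y$, and since the \EQ-protocol must accept the true instance $X\neq Y$ with a convincing proof, the referee accepts with probability $\geq 2/3$; if instead $X=Y$, then no matter what $Z$ (hence what proof state) is provided, the soundness of the \EQ-protocol guarantees the referee rejects with probability $\geq 2/3$, because $X=Y$ and the protocol must reject all proofs on equal inputs. Thus the referee distinguishes the two cases of ``One out of two'' with bounded error. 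Note that here, in contrast to \NE, it is the \emph{acceptance} case that corresponds to $Z=Y$ (distance case) and the \emph{rejection} case to $X=Y$ — this is exactly the swap relative to the \NE\ argument, but the reduction structure is identical.

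Finally, applying Theorem~\ref{the:oot} to this QR protocol for ``One out of two'' gives $b(a+m)=\Omega(n)$, which is the claimed bound.

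The only subtle point — and the step I would be most careful about — is the soundness direction: I must make sure that ``Alice supplies the proof maximising Merlin's success'' does not accidentally let the dishonest side cheat when $X=Y$. This is fine because the original \EQ-protocol is sound against \emph{all} proof states when $X=Y$, so in particular against $\rho_{X,Z}$ for any $Z$; the reduction never weakens soundness, it only fixes a particular (adversarially chosen) proof, and any fixed proof is covered by the universal soundness guarantee. The completeness direction is likewise immediate since $\rho_{X,Z}$ with $Z=Y$ is a legal optimal Merlin message for the true instance $X\neq Y$. Everything else — message lengths, the $ab=o(n)$ corollary, and the symmetric RRQ/QRR consequences — then follows exactly as in the \NE\ case.
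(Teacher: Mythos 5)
Your reduction template is right (convert the QRQ protocol into a QR protocol for ``One out of two'' and invoke Theorem~\ref{the:oot}), but the key step --- \emph{which} proof state Alice appends --- is wrong, and the completeness/soundness claims in your second paragraph are statements about \NE, not \EQ. For \EQ\ the YES-instances are those with $X=Y$: the protocol must \emph{reject every} proof when $X\neq Y$ (soundness) and must \emph{accept some} proof when $X=Y$ (completeness). Your sentences ``the \EQ-protocol must accept the true instance $X\neq Y$ with a convincing proof'' and ``the protocol must reject all proofs on equal inputs'' both have this exactly backwards. Concretely, the state $\rho_{X,Z}$ you have Alice send is the optimal proof for the instance $(X,Z)$, which under the promise $X\neq Z$ is a NO-instance of \EQ; such a proof is useless (no proof achieves acceptance above the soundness threshold on a NO-instance), and in the case $X=Y$ nothing guarantees that this particular proof makes the referee accept the YES-instance $(X,X)$ --- completeness only promises that \emph{some} proof works, not an arbitrary one. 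So your protocol may reject in both branches of the promise and fails to solve ``One out of two''.

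The fix, which is what the paper does, is to note that for \EQ\ the honest proof for the only relevant YES-instance is $\rho_X:=\rho_{X,X}$, the proof certifying ``Bob's input equals $X$'' --- and Alice already knows it, since it depends only on $X$ (Merlin need not send her anything, so the string $Z$ plays no role in constructing the proof). Alice appends $\rho_X$ to her original message; the referee answers ``$X=Y$'' on accept and ``$Z=Y$'' on reject. Completeness on the instance $(X,X)$ handles the case $X=Y$, and soundness on the NO-instance $(X,Y)$ with $X\neq Y$ handles the case $Z=Y$. With this correction the rest of your argument --- the message lengths $a+m$ and $b$, the appeal to Theorem~\ref{the:oot}, and the corollaries --- goes through exactly as in the paper.
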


\begin{corollary}
For QRQ protocols, if $ab=o(n)$, then $bm=\Omega(n)$, and indeed $m=\Omega(n/b)\geq\Omega(\min\{n/a,n/b\})$ as claimed in Table 1.
\end{corollary}

\begin{proof}[Proof of Theorem \ref{the:QRQeq}]
In the case of $X=Y$ there is a proof/message from Merlin $\rho_X$ that makes the protocol accept with high probability. Note that Alice knows this proof, since
she knows $X$.

To solve the ``One out of two'' problem again, Alice will simply choose the proof $\rho_X$, and send it together with her original message. Bob, again remains unchanged. If the referee accepts in the original protocol, then he will now announce that $X=Y$, otherwise that $Z=Y$.

Note that when $X=Y$ then the proof $\rho_X$ makes the Referee accept with high probability, and if not, then $Z=Y$ via the promise, and
the referee will reject, because the proof $\rho_Z$ on inputs $X\neq Y$ must lead to rejection with high probability in the original protocol.
\end{proof}

Again by symmetry we get the following corollary:

\begin{corollary}
Any RRQ $(a,b,m)$-protocol for $\EQ$ satisfies $b(a+m),a(b+m)\geq\Omega(n)$, and hence when $ab=o(n)$ we have $m\geq\Omega(\max\{n/a,n/b\})$.
\end{corollary}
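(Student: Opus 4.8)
The plan is to derive the RRQ bound for \EQ\ by the same reduction-to-``One out of two'' strategy used for the QRQ case in Theorem~\ref{the:QRQeq}, but now exploiting the fact that in RRQ \emph{both} Alice and Bob send classical messages, so the reduction can be run with either player in the ``quantum-role'' slot of Theorem~\ref{the:oot}. Concretely, take any RRQ $(a,b,m)$-protocol for \EQ. On inputs with $X=Y$ there is a (quantum) proof $\rho_X$ from Merlin accepted with high probability, and Alice knows $\rho_X$ since she knows $X$. As in the proof of Theorem~\ref{the:QRQeq}, fold this proof into Alice's message: Alice sends her original classical message together with $\rho_X$, Bob is unchanged, and the referee answers ``$X=Y$'' iff the original protocol accepts. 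This yields a QR $(a+m,b)$-protocol solving ``One out of two'': when $X=Y$ the proof $\rho_X$ forces acceptance, and otherwise the promise gives $Z=Y$, so the proof $\rho_Z$ (the only proof Alice could present that is tailored to the true input on the other side) must lead to rejection with high probability, exactly as in the completeness/soundness argument already carried out for \EQ. Hence by Theorem~\ref{the:oot} the message lengths of the QR protocol multiply to $\Omega(n)$, i.e.\ $b(a+m)\geq\Omega(n)$.

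The second inequality $a(b+m)\geq\Omega(n)$ comes for free by symmetry: since in RRR/RRQ Bob's message is also classical, we may instead fold Merlin's proof into Bob's message and leave Alice unchanged, obtaining a QR $(a, b+m)$-protocol (here the ``quantum player'' of the ``One out of two'' reduction is Bob, who now holds the two candidate strings, and Alice holds the single string) — again Theorem~\ref{the:oot} gives $a(b+m)=\Omega(n)$. Note it is essential here that Merlin's message is classical in RRQ only insofar as nothing is; actually $\rho_X$ may be quantum, but that is fine: the combined Alice-message is ``classical part plus quantum proof'', which is precisely the quantum message allowed for the quantum player in the QR model of Theorem~\ref{the:oot}.

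Finally, to get the table entry, assume $ab=o(n)$. From $b(a+m)=\Omega(n)$ and $ab=o(n)$ we get $bm=\Omega(n)$, i.e.\ $m=\Omega(n/b)$; symmetrically $m=\Omega(n/a)$; hence $m\geq\Omega(\max\{n/a,n/b\})$, as claimed. The only point requiring care — and the main obstacle — is verifying that the folding step preserves both completeness and soundness of ``One out of two'' simultaneously for the \emph{same} modified protocol: completeness needs that Alice can always supply the accepting proof for her own string, while soundness needs that no proof she supplies can fool the referee when in fact $Z=Y\neq X$; this is exactly the asymmetry between the two inputs that Theorem~\ref{the:QRQeq} already exploits, so the argument transfers essentially verbatim, with the added (trivial) observation that duplicating it with Alice and Bob swapped is legitimate precisely because both are classical in RRQ.
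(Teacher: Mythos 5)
Your proposal is correct and matches the paper's argument: the paper proves this corollary simply ``by symmetry'' from Theorem~\ref{the:QRQeq}, i.e.\ exactly the reduction you describe, folding Merlin's quantum proof into whichever classical player's message to get a QR instance of ``One out of two'' with the two-candidate player quantum, and invoking Theorem~\ref{the:oot} twice with the roles of Alice and Bob swapped. Your elaboration of the completeness/soundness transfer and the final arithmetic under $ab=o(n)$ is exactly what the paper leaves implicit.
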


We now turn to the case of QRR protocols, which exhibits a fundamentally different trade-off compared to the same case for \NE: while for \NE\  a proof from Merlin of length $n^{2/3}$ requires messages of length $n^{1/3}$ from Alice and Bob, for \EQ\ any sub-linear proof requires $ab=\Omega(n)$, at which point the problem can be solved without Merlin.

\begin{theorem}
Any QRR or RRR protocol for \EQ, in which the message lengths of Alice, Bob, Merlin are $a,b,m$ satisfies
$ab+m\geq\Omega(n)$.
\end{theorem}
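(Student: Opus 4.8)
The plan is to reduce to the already-established QR lower bound for ``One out of two'' (Theorem~\ref{the:oot}) in essentially the same way as in the proofs of Theorems~\ref{the:QRQ} and~\ref{the:QRQeq}, but being careful about the two regimes: the ``without Merlin'' regime where $ab=\Omega(n)$ follows directly from the Ambainis/Newman--Szegedy bound for plain SMP \EQ, and the ``with Merlin'' regime where we must show $m=\Omega(n)$ whenever $ab=o(n)$. So I would split into cases according to whether $ab=o(n)$ or not; in the latter case $ab=\Omega(n)$ gives $ab+m\geq\Omega(n)$ immediately, so assume $ab=o(n)$ from now on.

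The key step is the reduction: given a QRR (or RRR) $(a,b,m)$-protocol for \EQ\ with $ab=o(n)$, produce a QR $(a+m,b)$-protocol (in fact an RR protocol, but QR suffices) for ``One out of two'' exactly as in the proof of Theorem~\ref{the:QRQeq}. When $X=Y$, Merlin's honest proof $\rho_X$ makes the referee accept with high probability, and Alice knows $\rho_X$ since she knows $X$; so Alice appends $\rho_X$ to her own message. Bob is unchanged. On an instance $(X_1,X_2,Y)$ of ``One out of two'' with the promise $X_1=Y$ or $X_2=Y$ (and $X_1\neq X_2$): Alice uses $X_1$ to construct the proof $\rho_{X_1}$ and sends it along with her message; the referee announces ``$X_1=Y$'' iff the simulated \EQ-protocol accepts. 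If $X_1=Y$, acceptance happens with high probability; if instead $X_2=Y$, then $X_1\neq Y$, so the proof $\rho_{X_1}$ is being offered on an input pair with $X_1\neq Y$, and soundness of the \EQ-protocol forces rejection with high probability, so the referee correctly announces ``$X_2=Y$''. Thus the new protocol solves ``One out of two'' with message lengths $a+m$ from Alice and $b$ from Bob, and Theorem~\ref{the:oot} yields $(a+m)b\geq\Omega(n)$.

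From $(a+m)b\geq\Omega(n)$ with $ab=o(n)$ we conclude $mb\geq\Omega(n)$, hence $m\geq\Omega(n/b)$. To upgrade this to $m\geq\Omega(n)$ (and thus to $ab+m\geq\Omega(n)$ in this regime), I would run the reduction symmetrically with the roles of Alice and Bob swapped: since in the QRR/RRR setting Merlin's message is classical (or, in QRR, Merlin's proof can be forwarded by Bob just as well as by Alice — Merlin's message is not required to be quantum), the same argument gives a QR protocol for ``One out of two'' with Bob playing the ``two-input'' role, yielding $(b+m)a\geq\Omega(n)$ and hence $ma\geq\Omega(n)$, i.e.\ $m\geq\Omega(n/a)$. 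Combining, $m\geq\Omega(\max\{n/a,n/b\})$. Now use $ab=o(n)$: WLOG $a\leq b$, so $a=o(\sqrt n)$, whence $m\geq\Omega(n/a)=\omega(\sqrt n)$; more crudely, since $\min\{a,b\}=o(\sqrt n)$ we get $m=\omega(\sqrt n)$, and in particular $m\geq\Omega(\sqrt n)$, but we actually want the clean statement $ab+m\geq\Omega(n)$. For that, observe that if $ab+m<\epsilon n$ for small $\epsilon$, then $a,b<\sqrt{\epsilon n}$ so $\min\{a,b\}\le\sqrt{\epsilon n}$, forcing $m\geq\Omega(n/\min\{a,b\})\geq\Omega(\sqrt{n/\epsilon})$, which for small enough $\epsilon$ contradicts $m<\epsilon n$ once $n$ is large. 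Hence $ab+m\geq\Omega(n)$.

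The main obstacle I anticipate is not the reduction itself — which is a routine adaptation of the preceding two theorems — but making sure the symmetric version genuinely applies in the \emph{QRR} case: one must check that forwarding Merlin's proof through the \emph{classical} player (Bob) is legitimate, i.e.\ that the resulting protocol is of an admissible type (RR or QR) for which Theorem~\ref{the:oot} holds; this is fine precisely because Merlin's message in QRR is classical, so appending it to Bob's classical message keeps Bob classical and Alice quantum, i.e.\ a QR protocol with the two-input player on Bob's side (which Theorem~\ref{the:oot} covers up to swapping the names, since ``One out of two'' is symmetric in which single party holds the pair). The only other point requiring a line of care is the soundness direction of the reduction: we need that the original \EQ-protocol rejects with high probability on \emph{every} purported proof when $X\neq Y$, including the specific proof $\rho_{X_1}$ we feed it — but that is exactly what soundness of a non-deterministic protocol guarantees, so no extra work is needed.
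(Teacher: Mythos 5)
There is a genuine gap, and in fact the overall strategy cannot work. Your reduction to ``One out of two'' yields only the bounds $(a+m)b\geq\Omega(n)$ and $(b+m)a\geq\Omega(n)$, which is exactly the \NE-type trade-off; these inequalities are strictly weaker than $ab+m\geq\Omega(n)$. Concretely, $a=b=n^{1/3}$ and $m=n^{2/3}$ satisfies every inequality you derive (including $m\geq\Omega(\max\{n/a,n/b\})$) while $ab+m=O(n^{2/3})=o(n)$, so no amount of algebra starting from your inequalities can recover the theorem. The final step of your argument confirms this: from $ab+m<\epsilon n$ you get $m\geq\Omega(\sqrt{n/\epsilon})$, but $\sqrt{n/\epsilon}\ll\epsilon n$ for large $n$ and fixed $\epsilon$, so there is no contradiction. (The paper flags precisely this point just before the theorem: for \EQ\ with a \emph{classical} Merlin the trade-off is fundamentally different from the $b(a+m)$-type bound, which is why this case needs a separate argument.)

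The missing idea is to exploit the classicality of Merlin's message directly rather than forwarding it through a player. The paper's proof goes as follows: Merlin's classical message may be assumed deterministic (his acceptance probability is a convex combination over deterministic messages, and soundness holds for every message). If $m\leq n/3$, then by pigeonhole over the at most $2^{m}$ possible messages there is a single fixed message that achieves near-maximal acceptance for at least $2^{n-m}\geq 2^{2n/3}$ diagonal inputs $(X,X)$, while soundness on all $X\neq Y$ is preserved under any fixing. Hardwiring this message removes Merlin entirely and leaves a QR protocol for \EQ\ restricted to a set of $2^{2n/3}$ strings, which after renaming is \EQ\ on $2n/3$-bit inputs; the Merlin-free QR lower bound then gives $ab\geq\Omega(n)$. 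Either way, $ab+m\geq\Omega(n)$. Note that this fixing argument is unavailable for a quantum Merlin, which is exactly why QRQ/RRQ protocols for \EQ\ genuinely admit the weaker trade-off your reduction captures.
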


\begin{proof}

Given a QRR protocol for \EQ\ first observe that the classical message by Merlin can be assumed to be deterministic. He knows the whole protocol (although not the internal or measurement randomness of Alice, Bob, Referee). For any probability distribution on messages
his winning probability (Merlin's goal is to make the referee accept) is a convex combination of winning probabilities over deterministic messages, and he can as well pick the best of those without losing anything.

Assuming that $m\leq n/3$ (otherwise we are done) we can find a message for Merlin such that the message maximises acceptance for at least $2^{2n/3}$
inputs $X,X$ while still rejecting inputs $X,Y$ with $X\neq Y$ with high probability. Hence we can find a QR protocol that solves a sub-problem of \EQ\ that by renaming is equivalent to \EQ\ on $2n/3$ bit inputs, and we get that $ab=\Omega(n)$.
\end{proof}

In fact we also have a more general statement for any function $f$.
\begin{theorem} For any Boolean function $f$: if we have a QRR $(a,b,m)$-protocol for $f$, then $ab+m\geq \Omega(N(f))$, for the non-deterministic communication complexity $N(f)$ of $f$.
\end{theorem}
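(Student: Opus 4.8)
The final statement generalizes the preceding QRR/RRR theorem for EQ to an arbitrary Boolean function $f$, with the bound $ab+m \geq \Omega(N(f))$ where $N(f)$ is the nondeterministic communication complexity. The proof should follow the same template as the EQ case.

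First I would make Merlin's classical message deterministic. Since Merlin's objective is to maximize the referee's acceptance probability, and for any fixed inputs this probability is linear in the distribution over Merlin's messages, Merlin can without loss restrict to a single best deterministic message for each input pair — more precisely, fix for each input pair $(X,Y)$ the message achieving the maximum. This costs nothing.

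Next I would argue by a counting/averaging argument. A nondeterministic protocol of cost $N(f)$ is essentially a cover of the $1$-inputs of $f$ by $2^{N(f)}$ combinatorial rectangles, each avoiding all $0$-inputs; equivalently, Merlin's $m$-bit message partitions (or covers) the $1$-inputs into at most $2^m$ classes, and on each class the QR protocol of Alice and Bob (with Merlin's message hard-wired) must accept all the $1$-inputs in that class with high probability while rejecting every $0$-input with high probability. If $m < \Omega(N(f))$ is too small, then by pigeonhole one of these $2^m$ classes must contain a "large" sub-rectangle's worth of structure — specifically, one message value $z$ is the best response for a set $S$ of $1$-inputs with $|S| \geq 2^{N(f)}/2^m$ times the appropriate normalization, and the QR protocol with $z$ hard-wired solves a communication problem whose nondeterministic complexity is $\Omega(N(f) - m)$ (the restriction of $f$ to the relevant sub-domain still requires many rectangles). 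Then, by the QR lower bound machinery of this paper — concretely, the technique behind Theorem~\ref{the:oot} and its proof, which shows that a QR protocol solving a problem of "EQ-like" nondeterministic complexity $d$ requires $ab = \Omega(d)$ — we get $ab = \Omega(N(f) - m)$, hence $ab + m = \Omega(N(f))$.

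The main obstacle is making the middle step rigorous for a \emph{general} $f$ rather than for EQ: one needs that fixing Merlin's best deterministic message on a large set of $1$-inputs yields a QR sub-problem whose complexity is still controlled by $N(f)$, and one needs the QR lower bound (currently proved in this paper specifically via the information-theoretic argument tailored to EQ/``One out of two'') to apply to that sub-problem. For EQ this worked because restricting to $2^{2n/3}$ diagonal inputs is literally EQ on fewer bits; for general $f$ the honest statement is that the QR cost lower-bounds the nondeterministic complexity of the residual problem, which is what we want, and the cleanest route is to invoke (or re-derive) the general fact that $ab = \Omega(N(g))$ for any $g$ solvable by a QR SMP protocol — this is essentially the content of the ``more general statement'' and may simply be cited from the companion analysis rather than reproved. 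So in practice the proof is: (i) derandomize Merlin; (ii) pigeonhole to a message value good for a combinatorially large set of $1$-inputs, reducing to a QR protocol for a problem of nondeterministic complexity $\Omega(N(f)-m)$; (iii) apply the QR-implies-$ab=\Omega(N(\cdot))$ bound; (iv) combine.
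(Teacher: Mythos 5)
Your proposal is essentially the paper's argument: derandomize Merlin, fix each proof message $M$ to obtain a partial function $f_M$ solved by the residual QR protocol, and use the fact that the QR cost $ab$ dominates the (non)deterministic complexity of $f_M$; the paper assembles these pieces into a nondeterministic protocol for $f$ of cost $m+O(ab)$, whereas you take the contrapositive via pigeonhole, but these are the same argument. The ingredient you correctly flag as needing to be imported rather than derived from the EQ-specific information-theoretic machinery --- that a QR SMP protocol for $g$ forces $ab=\Omega(N(g))$ --- is exactly what the paper supplies by citing the Klauck--Podder result that a QR $(a,b)$-protocol can be simulated by a deterministic one-way protocol of cost $O(ab)$. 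One small correction: your pigeonhole should be over cover numbers rather than over the number of $1$-inputs in a class (a class can contain many $1$-inputs yet have trivial nondeterministic complexity); the right statement is $N(f)\le m+\max_M N(f_M)$, since the union over $M$ of the rectangle covers of the $f_M$'s covers all $1$-inputs of $f$ while avoiding all $0$-inputs.
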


\begin{proof}[Proof (sketch)]
Again we may assume that Merlin's message is deterministic. The idea is to fix proof messages $M$ of Merlin, and thus obtain partial functions $f_M$ accepting all 1-inputs for which $M$ is a good proof, while rejecting all 0-inputs. Using a result of Klauck and Podder \cite{KlauckPodder14} we can find a deterministic one-way protocol for $f_M$ of cost $O(ab)$, and putting Merlin's proofs back in gives us a non-deterministic protocol.
\end{proof}

\section{Protocols}\label{upper_sect}
In this section we give two protocols that illustrate the tightness of most of our bounds.

\subsection{Tightness for \NE\ and RRR protocols}

We start with the case of $\NE$. Here RRR, QRR, RRQ protocols all have the same complexity, as shown by our lower bounds and the following upper bound.

\begin{theorem}\label{prot:ne}
There is an RRR $(a+\log n,a+\log n,m+\log a)$-protocol for \NE\  for all $a,m$ such that $am\geq c\cdot n$ for some constant $c$.
\end{theorem}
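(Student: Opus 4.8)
The plan is to give an RRR protocol for \NE\ that, given an input with $X\neq Y$, uses Merlin to point to a block where $X$ and $Y$ differ, and then has Alice and Bob verify that block via an error-correcting-code-based equality check in the spirit of Ambainis's private-coin protocol. First I would fix an error-correcting code $C:\01^n\to\01^N$ with $N=O(n)$ and relative distance at least $1/3$, and arrange the code-word $C(x)$ as a $k\times \ell$ Boolean matrix where $k\asymp m$ and $\ell\asymp N/m\asymp n/m$ (so $k\ell=N$, $k\ell\geq$ the needed size, and $\ell=O(a)$ once we set $a$ appropriately — the balancing will give $a\asymp n/m$, i.e. $am\geq c\cdot n$). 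Merlin, who sees both $X$ and $Y$, sends the index $r\in[k]$ of a row on which $C(X)$ and $C(Y)$ differ in at least (a constant fraction of) $\ell$ positions; when $X\neq Y$ such a row must exist by the distance of $C$, and when $X=Y$ no row differs at all, so a dishonest Merlin cannot create a discrepancy. This message costs $\log k = O(\log m)\le O(m+\log a)$.

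Next, Alice and Bob must check that the row $r$ announced by Merlin really is a row on which their code-words differ substantially — and crucially they must do this without trusting Merlin and without $X=Y$ ever passing. I would have Bob pick a uniformly random column $j\in[\ell]$ and send $(j, C(Y)_{r,j})$ — wait, Bob does not know $r$; so instead Bob sends his random column index $j$ together with the \emph{entire $j$-th column} of $C(Y)$, which has length $k=O(m)$; simultaneously Alice sends the $j$-th column of $C(X)$, also of length $k$. Actually this already costs $O(m)$ from each of Alice and Bob, which exceeds the claimed $a+\log n$ bound when $a\ll m$. So the honest fix is the reverse arrangement: Bob sends a random \emph{row} index together with that row of $C(Y)$, which has length $\ell=O(a)$, and Alice sends the same row of $C(X)$, length $\ell=O(a)$; Merlin then sends the \emph{column} $r\in[\ell]$ on which they should differ (with $k\asymp m$ rows, the distance guarantees many full columns differ). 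Concretely: set up $C(x)$ as an $m\times a$ matrix (with $ma=N=\Theta(n)$, which is exactly the hypothesis $am\geq c\cdot n$); Merlin sends a column index $r\in[a]$ claimed to be a column where $C(X)$ and $C(Y)$ differ in $\geq m/3$ positions ($\log a$ bits, hence $m+\log a$ total if we also allow Merlin $m$ further bits of slack, but in fact $\log a$ suffices here); Bob picks a uniformly random row $i\in[m]$ and sends $(i, C(Y)_{i,\cdot})$ — the whole $i$-th row, of length $a+\log m = O(a+\log n)$; Alice sends $(i$ is not hers) — rather Alice, not knowing $i$, sends instead her whole contribution based on $r$... .

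Let me commit to the clean version. Alice and Bob fix the code and the $m\times a$ matrix layout, plus shared public? no — private randomness only. Merlin sends a column index $r$. Bob sends a uniformly random row index $i\in[m]$ together with the single bit $C(Y)_{i,r}$? He doesn't know $r$. The resolution used in the paper's ``One out of two'' protocol is that the \emph{row}/\emph{column} chosen by the random player and the one chosen by Merlin intersect in one cell, and each party sends a full row resp. column. So: Bob sends a random row $i$ and the full row $C(Y)_{i,\cdot}$ ($O(a+\log n)$ bits); Alice sends a full \emph{column}, but which one? She learns $r$ only if Merlin talks to her — but Merlin only talks to the referee. Therefore Alice must send, for the column $r$ that \emph{Merlin} will announce, the column $C(X)_{\cdot,r}$; since Alice doesn't know $r$, route Merlin's message through Alice is forbidden. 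Instead: Alice sends the full row $C(X)_{i,\cdot}$ — but she doesn't know $i$ either. The only parties who can coordinate are Merlin (sees everything) and the referee. So the honest design is: Merlin sends $(r, C(X)_{\cdot,r}, C(Y)_{\cdot,r})$? That's length $O(m)$ from Merlin, matching the $m+\log a$ budget, but then Merlin could lie about the column contents. To police Merlin: Alice sends a random row $i_A\in[m]$ and $C(X)_{i_A,\cdot}$; Bob sends a random row $i_B$ and $C(Y)_{i_B,\cdot}$. The referee checks Merlin's claimed column $C(X)_{\cdot,r}$ against Alice's row at position $i_A$ (i.e. $C(X)_{i_A,r}$ must match), and similarly Merlin's claimed $C(Y)_{\cdot,r}$ against Bob's row; if Merlin's claimed columns disagree in $\geq m/3$ places \emph{and} pass both spot-checks, accept ($X\neq Y$), else reject. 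Alice and Bob each send $O(a+\log n)$, Merlin sends $O(m+\log a)$, and $ma=\Theta(n)$.

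\textbf{Correctness and the main obstacle.} Completeness: if $X\neq Y$, by the distance of $C$ there are $\geq N/3 = ma/3$ positions where the code-words differ, so by averaging some column $r$ has $\geq m/3$ differing entries; Merlin sends this $r$ with the true columns, which pass all spot-checks and certify a large discrepancy, so the referee accepts. Soundness: if $X=Y$, then $C(X)=C(Y)$, so for \emph{every} column the true contents are identical; Merlin must lie about at least one of the two claimed columns to manufacture a discrepancy, but a lie in $\geq m/3$ coordinates of (say) the claimed $C(X)_{\cdot,r}$ is caught by Alice's random-row spot-check with probability $\geq 1/3$, and one can either accept this constant soundness error or boost it by having Alice and Bob each send $O(1)$ independent random rows (still $O(a+\log n)$), driving the detection probability above $2/3$. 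The main obstacle — and the step requiring the most care — is exactly this soundness analysis against an adaptive Merlin who also knows the code: I need that \emph{any} deviation Merlin could make either fails to create an apparent $\geq m/3$ discrepancy (hence is rejected outright) or is caught by the independent uniform spot-checks with constant probability; since Merlin's message is fixed before Alice's and Bob's private coins are drawn (he does not see their randomness), a union bound over the two claimed columns plus amplification by a constant number of spot-check rows suffices, and the final error is pushed below $1/3$ on both sides. The remaining bookkeeping — choosing the code, verifying $N=O(n)$, matching the exact logarithmic additive terms $a+\log n$ and $m+\log a$, and the $am\geq c\cdot n$ trade-off — is routine.
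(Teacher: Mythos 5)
Your final committed protocol is essentially identical to the paper's: arrange the code-word as a matrix with $am=\Theta(n)$ cells, have Merlin send both claimed length-$m$ lines through a heavily-differing direction plus the index, have Alice and Bob each send a uniformly random crossing line of length $a$ plus its index, and have the referee spot-check the intersection cells and then test whether Merlin's two claimed lines differ in $\ge m/3$ positions (the paper transposes your row/column convention but the protocol is the same). Your completeness and soundness analysis — perfect completeness, and constant detection probability against a cheating Merlin who must corrupt $u+v\ge m/3$ entries split between the two claimed lines, boosted by constant repetition — matches the paper's argument, so the proposal is correct.
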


\begin{proof}

Consider the following protocol.
\itemi{
\item Alice, Bob, Merlin fix an error-correcting code $C:\01^n\to\01^N$, where $N=\asO n$ and $C(x)$ differs from $C(y)$ on at least $N/3$ positions for every $x\ne y$.
Additionally, let $N=am$ for some $a,m\in\NN$.
\item Viewing the code-words of $C$ as $a\times m$ Boolean matrices, Alice chooses a uniformly random $i\in[m]$, and Bob uniformly random $j\in[m]$.
Merlin (if honest) chooses some $k\in[a]$ such that $C(x)$ and $C(y)$ differ on at least $m/3$ positions of the \ord[k] row.
\item Alice and Bob send columns $i$ resp.~$j$ of the encodings of $x,y$ plus the numbers $i,j$.
\item Merlin sends row $k$ of both of the encodings of $x,y$ plus the number $k$.
\item If the \ord[k] entry of the column sent by Alice is not equal to the \ord[i] entry of the row sent by Merlin, then the referee rejects. Similarly, if the \ord[k] entry of the column sent by Bob is not equal to the \ord[j] entry of the row sent by Merlin, reject.
\item Accept, if the rows sent by Merlin for $x,y$ differ in at least $m/3$ positions, otherwise reject.
}

If $x\neq y$ then Merlin can proceed as indicated, and the protocol will accept with certainty.

It remains to show that Merlin can not cheat if $x=y$. Denote by $r_k$ and $s_k$ the two rows sent by Merlin, which coincide in at most $2m/3$ positions. $a_i$ and $\beta_j$ are the columns sent by Alice and Bob.

If Merlin cheats by changing $u$ entries in $r_k$, then he will be caught with probability $u/m$ by the test against Alice's message, similarly he will be caught with probability $v/m$ if he changes $v$ entries in $s_k$. But to pass the last test $u+v\geq m/3$. Hence the total probability with which he will be caught is at least $1/3$.

We thus have an $(a,a,m)$-protocol with $am=O(n)$ that has perfect completeness and soundness error $2/3$. Repetition can improve this to arbitrarily small error.

\end{proof}

\subsection{Tightness for QRQ and RRQ protocols}

In the next section we will show that a quantum message of length $\log n$ can be replaced by a classical message of length $O(\sqrt n\log n)$ by a trusted player, and a quantum message of length $O(\sqrt n\log n)$ by an untrusted player. This is the new primitive Untrusted Quantum State Transfer (UQST).

Furthermore we need the following fact from~\cite{BCWW01_Qu_Fi}: a QQ protocol for \EQ\  (or \NE) needs only communication $\log n+O(1)$ from either player. In this protocol, the players send superpositions over the indices and entries of an error-correcting code for $x,y$, and the swap-test is used by the referee to tell whether $x=y$ or not.

Hence we may replace the $\log n+O(1)$ qubit message from Bob by a $O(\sqrt n\log n)$ bit randomised message from Bob together with a $O(\sqrt n\log n)$ quantum message from Merlin, leading to a QRQ, $(\log n, \sqrt n\log n,\sqrt n\log n)$-protocol. Note that this works for both \EQ\ and \NE.

The same approach works for other values of Bob's message length, leading to a protocol in which Alice sends $O(\log n)$ qubits, Bob $O(b\log b)$ bits, and Merlin $O(n/b\log b)$ qubits.

We would like to stress that such a protocol is impossible in the QRR or the RRQ case: our lower bounds show that for \NE\ Merlin's message needs to be at least $\Omega(n/\log n)$ (qu-)bits long if another player only sends $O(\log n)$ qubits.

We now turn to RRQ protocols. Informally, the idea is to ``de-quantise'' both messages from Alice and Bob as above. This leads to a protocol in which Alice and Bob both send $O(a\log a)$ bits, whereas Merlin sends $O(n/a\cdot\log n)$ qubits (see Section \ref{rrq_protocol_sect}).

Note that the RRQ protocol also works for \NE, however, the protocol in Theorem \ref{prot:ne} is simpler, slightly more efficient, and does not use quantum messages.

\section{Untrusted Quantum State Transfer}\label{uqst_sect}

For the task of untrusted quantum state transfer (UQST) players Alice and Merlin, holding the classical description of a pure quantum state $|\phi\rangle$ on $\log n$ qubits, have to provide messages to the referee, such that the referee can get a single copy of a state $\rho$ that is $\epsilon$-close to $|\phi\rangle$ in the trace distance. Here Merlin can send quantum messages, but is untrusted, whereas trusted Alice can send only classical (randomised) messages. We are interested in the lengths of the messages they have to send.

More formally, in a protocol for UQST Alice and Merlin send messages to the referee.
The referee produces two outputs: a classical bit meaning acceptance or rejection, and a quantum state $\rho$ on $\log n$ qubits.  The protocol is $(\epsilon,\delta)$-successful, if

\begin{enumerate}
\item For every quantum message from Merlin: the probability of the event that $\rho$ satisfies
$||\rho-|\phi\rangle\langle\phi|\,||_t>\epsilon$ and the referee accepts (simultaneously) is at most $\delta$.
The probability is over randomness in the (possibly mixed) quantum state $\rho$ as well as  Alice's random choices and the referee's measurements.
\item there is a message from Merlin such that the referee will accept with probability at least $1-\delta$.
\end{enumerate}

Note that in the discrete version of this problem $|\phi\rangle$ is given as a vector of $n$ floating point numbers with (say) $100\log n$ bits precision each.

\subsection{A protocol}

\begin{theorem}
For any $a\geq10\log(n)/(\epsilon^6\delta^6)$ Untrusted Quantum State Transfer can be implemented with Alice sending $O(a\log (a/(\epsilon\delta)))+O(\log n)$ bits and Merlin $O((n/a)\log n/(\delta^3\epsilon^2))$ qubits.
\end{theorem}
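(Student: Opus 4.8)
The plan is the natural one: Merlin is asked to send many copies of $|\phi\rangle$, Alice sends a short randomised classical ``sketch'' of $|\phi\rangle$, and the referee sets aside one of Merlin's registers as his output $\rho$ while using the remaining registers together with Alice's sketch to run a consistency test, accepting iff the test passes. Concretely, Merlin is supposed to send $R=\Theta\big((n/a)/(\epsilon^{2}\delta^{3})\big)$ registers, each claimed to hold $|\phi\rangle$, which is $R\log n=O\big((n/a)\log n/(\epsilon^{2}\delta^{3})\big)$ qubits. Alice draws a seed of $O(\log n)$ bits describing a pairwise-independent hash $h:[n]\to[a]$ (inducing $a$ cells $B_\ell=h^{-1}(\ell)$) together with a random subset $S\subseteq[n]$ of size $a$, and she sends the induced cell masses $q_\ell=\sum_{i\in B_\ell}|\phi_i|^2$ rounded to precision $\sim\epsilon\delta/a$, plus the amplitudes $(\phi_i)_{i\in S}$ to precision $\sim\epsilon\delta/a$; this costs $O\big(a\log(a/(\epsilon\delta))\big)+O(\log n)$ bits, as claimed. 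The referee applies a uniformly random permutation to Merlin's $R$ registers, keeps the last one as $\rho$, measures each of the remaining $R-1$ in the computational basis to obtain an index $i$ (recording $h(i)$, and for the copies with $i\in S$ doing instead the projective test $\{|\phi|_S\rangle\langle\phi|_S|,\,I-\cdot\}$ against the known sub-vector), and accepts iff (a) the empirical cell histogram passes an identity-tester against $q$ at distance $\Theta(\epsilon^{2}\delta)$, and (b) all of the $S$-projective tests succeed.

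\textbf{Completeness.}
If Merlin is honest and sends $|\phi\rangle^{\otimes R}$, the random permutation is harmless, $\rho=|\phi\rangle$, every projective test onto $|\phi|_S\rangle$ succeeds, and the cell histogram is the empirical distribution of $R-1$ i.i.d.\ draws from $q$; a Chernoff/DKW bound then shows it passes the identity-tester except with probability $\le\delta$ as soon as $R-1\gtrsim \sqrt a/(\epsilon^{2}\delta)\cdot\log(1/\delta)$, which is exactly where the choice $R=\Theta\big((n/a)/(\epsilon^{2}\delta^{3})\big)$ and the hypothesis $a\ge 10\log n/(\epsilon^{6}\delta^{6})$ are used (the latter also guaranteeing that $\Omega(1)$ of Merlin's copies land in $S$ for the projective tests).

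\textbf{Soundness.}
Given an arbitrary, possibly entangled, $R$-register state $\tau$ from Merlin, the referee's random permutation makes it permutation-symmetric, so I would pass to a finite quantum de Finetti / measure-then-symmetrise reduction: the joint state of the measured registers is close to a mixture of i.i.d.\ copies of a single distribution, and conditioned on acceptance this distribution's cell statistics are $O(\epsilon^{2}\delta)$-close to $q$ and its restriction to $S$ is, in fidelity, $(1-O(\epsilon\delta))$-close to $|\phi|_S\rangle$; by symmetry the kept register $\rho$ inherits (up to the same losses) the same single-copy description. Combining the cell-mass agreement, the exact $S$-amplitude agreement, and Fuchs--van der Graaf / the swap-test Fact to move between fidelity and trace distance, any $\rho$ with $\|\rho-|\phi\rangle\langle\phi|\|_t>\epsilon$ must, with probability $\ge 1-O(\delta)$ over Alice's choice of $(h,S)$, produce statistics that the referee rejects, which is the required bound $\delta$.

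\textbf{Main obstacle.}
The crux is the soundness analysis, for two intertwined reasons. First, a cheating Merlin can entangle the $R$ registers arbitrarily, and swap-tests plus computational-basis statistics only pin down low-order marginals; turning ``the accepted marginal looks right'' into ``the single output copy is $\epsilon$-close to $|\phi\rangle$'' needs a careful symmetrisation together with a quantitative (finite-$N$) de Finetti estimate, and the error terms from this step, from the distribution-testing step, and from the granularity of the hash compound multiplicatively. Second, and more delicate, one must design the sketch so that a deviation which is \emph{spread out} over all $n$ coordinates -- where a single hash collapses the $\ell_1$-signal by a factor $\sqrt{a/n}$ -- is still detected; this is what dictates the size $\Theta(a)$ of the sketch and the number $\Theta((n/a)/(\epsilon^2\delta^3))$ of copies, and is the source of the somewhat large polynomial overheads $1/(\epsilon^6\delta^6)$ and $1/(\epsilon^2\delta^3)$ appearing in the statement.
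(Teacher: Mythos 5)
Your high-level architecture (many copies from Merlin, a short classical sketch from Alice, one register set aside as output, the rest consumed by consistency tests) matches the paper, but the sketch and the test you propose do not certify a quantum state, and the soundness route you invoke is not quantitatively available. First, the hashed cell masses $q_\ell$ and the empirical computational-basis histogram are phase-blind: a cheating Merlin can send $|\phi'\rangle$ with $|\phi'_i|=|\phi_i|$ for all $i$ but adversarial phases, which has identical computational-basis statistics in every cell yet can satisfy $|\langle\phi|\phi'\rangle|^2=O(1/n)$, i.e.\ trace distance essentially $1$. The only phase-sensitive ingredient is the $S$-restricted projective test, and as written it is incoherent: you cannot first measure a register in the computational basis to learn $i$ and then ``instead'' apply $\{|\phi|_S\rangle\langle\phi|_S|,\,I-\cdot\}$, and if the projective test is applied directly to a fresh copy, even the honest prover passes it only with probability $\sum_{i\in S}|\phi_i|^2\approx a/n=o(1)$, so requiring ``all $S$-projective tests succeed'' kills completeness. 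The natural repair --- first measure $\{\Pi_S,I-\Pi_S\}$ for $\Pi_S=\mathrm{span}\{|i\rangle:i\in S\}$ and test the survivors against the renormalised restriction --- is exactly the paper's protocol, except with a random \emph{coordinate} subspace in place of a Haar-random one, and that substitution breaks the proof: a coordinate subspace does not satisfy the Johnson--Lindenstrauss property, so a deviation $|\phi\rangle-|\psi_j\rangle$ concentrated on a few coordinates is simply missed by a random $S$ of size $a=o(n)$ (e.g.\ $|\phi\rangle=|1\rangle$ and $|\psi_j\rangle$ far from it but agreeing on $S$). The paper's soundness argument hinges precisely on the fact that a Haar-random $a$-dimensional projection shrinks the norm of \emph{every} fixed difference vector by the factor $\sqrt{a/n}$ with sharp concentration, so far-away blocks remain far after projection and the swap-tests against Alice's transmitted $|\phi_V\rangle$ catch them.

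Second, the de Finetti step is a dead end here: quantitative finite de Finetti bounds carry an error that grows with the local dimension $n$, while Merlin sends only $O\bigl((n/a)/(\epsilon^2\delta^3)\bigr)$ registers, far too few for the approximation to be nontrivial. The paper avoids symmetrisation entirely by observing that the referee only performs block-local measurements, so Merlin's state may be replaced by its block-separable dephasing without changing any outcome statistics; convexity then reduces soundness to pure product states $\sigma_1\otimes\cdots\otimes\sigma_m$, for which one argues directly that if the average trace distance $E_j\|\sigma_j-\phi\|_t$ is large, the projected survivors are still far from $|\phi_V\rangle$ (by Johnson--Lindenstrauss applied to each difference vector) and the product of swap-test acceptance probabilities is driven below $\delta$. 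You would need both of these ingredients --- a Haar-random (not coordinate/hash-based) subspace sketch, and the block-local-measurement dephasing argument in place of de Finetti --- to make your outline go through.
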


\begin{proof}
We may assume that $a\leq \epsilon n$, because otherwise Alice can send the classical description and Merlin is not needed.

We start by describing the protocol. Let $|\phi\rangle\in\mathbb{C}^n$.
To simplify presentation we first assume that Alice and the referee share randomness (without communicating). We will remove this assumption later. The additional communication cost of removing this shared randomness is at most $O(\log n)$ if the quantum state $|\phi\rangle$  is described by $poly(n)$ classical bits.

Alice and the referee start by choosing a random $a$-dimensional subspace $V$ of $\mathbb{C}^n$. The distribution we use is the normalised Haar measure on $a$-dimensional subspaces (the uniform distribution). Denote
$|\phi_V\rangle=Proj_V |\phi\rangle/||Proj_V|\phi\rangle||$. Alice and the referee agree beforehand on a fixed basis for each subspace of dimension $a$. Alice's message is the classical description of $|\phi_V\rangle$ within precision $\epsilon^{2}\delta^4/100$, i.e., the description of a state $|\psi\rangle$ such that $||\,|\psi\rangle\langle\psi|-|\phi_V\rangle\langle\phi_V|\,||_t\leq\epsilon^{2}\delta^4/100$. To achieve this Alice can simply send each entry of the projected vector as a floating point numbers with $\log a+O(\log(1/(\epsilon\delta)))$ bits precision, i.e., Alice's message has length $O(a\log(a/(\epsilon\delta)))$. We stress that the referee now knows a classical description of $|\psi\rangle$.

We will describe Merlin, the untrusted player, and his behaviour in terms of the situation where he is honest, and later analyse the probability that we catch him when dishonest. The honest Merlin simply sends $m=200n/(a\epsilon^2\delta^3)$ copies of $|\phi\rangle$, i.e., $|\phi\rangle^{\otimes 200n/(a\epsilon^2\delta^3)}$.

Now let us turn to the referee.
He picks a random $i\in[1,...,m]$. This index will be the group of $\log n$ qubits in Merlin's message that he will output (or use otherwise) if the rest of the message passes all tests.

The referee knows the random subspace $V$, and measures each consecutive block (labelled from $\{1,\ldots, m\}$) of $\log n$ qubits in Merlin's message using the observable $V,I-V$, except block $i$ (we abuse notation by identifying the space $V$ with the projector on $V$). Projection on $V$ is considered acceptance of the measurement, and we say that a block {\em survives} the measurement if it accepts.

If none of the measurements results in acceptance then the referee will reject. Otherwise some $k$ blocks $i_1,\ldots, i_k$ will be accepted. After the measurements the referee (for $j\in\{1,\ldots, k\}$) applies the swap-test to the (projected) states he holds on the qubits of the blocks $i_j$ and the state $|\psi\rangle$.
If {\bf any} of the tests fail then the referee rejects. If all tests pass, the referee takes the copy in block $i$ as his output and accepts.
Note that a classical description of $|\psi\rangle$ is known to the referee, so he can use a fresh copy of this state for every test (in fact it is slightly better to measure the blocks by the observable consisting of the projection on $|\psi\rangle$ and the projection on its orthogonal subspace instead).

First we analyse the probability with which an honest Merlin can make the referee accept. The main problem here is the precision in Alice's message. But before that we need to know the number $k$ of copies that survive" the measurement.

\begin{lemma}
Let $|\phi\rangle$ denote any pure state in $n$ dimensions, and $V$ a random $a$-dimensional subspace (under the Haar measure).
Then the probability that $|\phi\rangle$, measured by $V,I-V$, will be accepted is $a/n$.
\end{lemma}

\begin{proof}
Due to symmetry we may assume that $V$ is the space spanned by the first $a$ vectors in the standard basis, and $|\phi\rangle$ is a random vector on the $n-1$-dimensional sphere. The probability of acceptance is the expected squared length of the projection of $|\phi\rangle$ on $V$.
This is $a$ times the expectation of $|\langle\phi|e_1\rangle|^2$, where $e_1$ is any standard basis vector. Due to symmetry the squared projection length onto any basis vector is $1/n$.
\end{proof}

We will later also need information regarding the deviation from the expectation. This is provided by the following main ingredient to the Johnson-Lindenstrauss Lemma, see~\cite{Gupta}.

\begin{fact}\label{fact:jl}
Let $|\phi\rangle$ denote any pure state in $n$ dimensions, and $V$ a random $a$-dimensional subspace (under the Haar measure). Then for the squared length $L$ of $|\phi\rangle$ projected to $V$ and all $0\leq\beta<1/2$ we have:

\[Prob(L\leq (1-\beta)a/n)\leq e^{-a\beta^2/4}.
\]
\[Prob(L\geq (1+\beta)a/n)\leq e^{-a\beta^2/8}.\]

\end{fact}

We have that $E[k]=200/(\delta^3\epsilon^2)$, and the probability that $k<100/(\delta^3\epsilon^2)$ is at most $e^{-25/(\delta^3\epsilon^2)}<\delta/2$ by the Chernoff bound.
If $k>100/(\delta^3\epsilon^2)$ then the referee will keep exactly $k=100/(\delta^3\epsilon^2)$ copies and discard the remaining ones.
Since Merlin is honest, all $k$ projected states are copies of $|\phi_V\rangle$, and we get that
$F^2(\phi_V,\psi)\geq 1-||\,|\psi\rangle\langle\psi|-|\phi_V\rangle\langle\phi_V|\,||_t= 1-\epsilon^{2}\delta^4/100$.
Hence the swap-test will succeed on a copy with probability at least $1/2+(1-\epsilon^{2}\delta^4/100)/2\geq 1-\epsilon^{2}\delta^4/200$.

The probability that at least one of the $k$ swap-tests fails is at most $k\cdot\epsilon^{2}\delta^4/200<\delta/2$.
In total the failure probability is hence at most $\delta$, and on acceptance the resulting quantum state is $|\phi\rangle$, which is exactly as desired.

Now consider a dishonest Merlin, who can send any quantum message $\sigma$ on $m\log n=200(n/a)/(\epsilon^2\delta^3)\cdot\log n$ qubits.
First observe that since the referee measures the $m$ blocks (presumed to be copies of $|\phi\rangle$) separately and consecutively, there is no advantage for Merlin to entangle the blocks. This follows from (inductively applying) the following claim.

\begin{claim}\footnote{To prove this claim consider the density matrix of $\rho_{AB}$ in a product basis, and change it into an block-diagonal matrix by replacing the off-diagonal entries with 0. Each POVM-element acts on a diagonal block only, and hence results are unchanged.}
Let $\rho_{AB}$ denote a bipartite quantum state. Then there is an unentangled state $\sigma_{AB}$, such that the results of all measurements acting on $A$ and $B$ alone are the same for both states.
\end{claim}
Hence Merlin's message $\sigma$ is w.l.o.g.~separable across blocks, i.e., a probability distribution on products of pure states across blocks.
Denote by $\phi$ the density matrix of $|\phi\rangle$. We will show the following:

\begin{lemma} For every pure product state sent by Merlin, i.e., every
$\sigma=\sigma_1\otimes\cdots\otimes\sigma_m$ with all $\sigma_i$ pure, either $E_j ||\sigma_j -\phi||_t\leq \epsilon\delta/2$ or
the referee will reject with probability at least $1-\delta/2$.

\end{lemma}

But if
$E_j ||\sigma_j -\phi||_t\leq \epsilon\delta/2$, then the probability that the output (a random $\sigma_i$) satisfies
$||\sigma_i -\phi||_t\geq \epsilon$ is at most another $\delta/2$ and in total the error probability is no more than $\delta$.

Now suppose that Merlin sends a probability distribution on a product of pure states, i.e., $\sigma=\sum_l p_l \sigma^{(l)}$, where $\sigma^{(l)}=\sigma^{(l)}_1\otimes\cdots\otimes\sigma^{(l)}_m$ and the $p_l$ are probabilities, the $\sigma_j^{(l)}$ pure states. Denote by $a_l$ the acceptance probability on $\sigma^{(l)}$, and $d_l=E_j ||\sigma^{(l)}_j -\phi||_t$. The probability that the average block is far is $f=\sum_{l:d_l\geq\epsilon\delta/2} p_l$, and the probability of acceptance on each $l$ where $d_l\geq\epsilon\delta/2$ is at most $\delta/2$. So the joint probability of acceptance and being far is $f\delta/2\leq\delta/2$, and again with probability $1-\delta$ the output will be good.

We now prove the lemma. In the following $\sigma=\sigma_1\otimes\cdots\otimes\sigma_m$, where all $\sigma_j$ are pure states.

Recall that Merlin does not know $V$, and since $V$ is drawn independently of Merlin's message the expected probability that block $j$ will survive the measurement is exactly $a/n$.
The referee will end up using $k=100/(\epsilon^2\delta^3)$ blocks $i_1,\ldots,i_k$ (and reject if there are fewer blocks that survive the measurement). Also, $i$ (the number of the block that the referee retains) is chosen uniformly random from $1$ to $m$.

Now assume that Merlin cheats significantly. As discussed this means for us
that $E_j ||\sigma_j-\phi||_t>\epsilon\delta/2$. We need to show that the state $\sigma$ leads to rejection with probability at least $1-\delta/2$ in this situation.
We will show that then the swap-tests will make the referee reject with high probability. But before the swap-tests the referee measures $(V,I-V)$, leading to acceptance of $k$ blocks (or more, but those are discarded).
The probability that a block will ``survive'' the measurement is the squared projection length onto $V$. $V$ is not known to Merlin, and for each $\sigma_j$ the expected survival probability is $a/n$. A potential problem here is that Merlin might somehow be able to skew the set of surviving blocks towards those that inside $V$ look like $\phi$, even though they are globally far away from $\phi$. This is impossible because Merlin does not know $V$.

In fact Merlin is not able to influence the set of chosen positions much.
The random experiment of picking the blocks that survive consists of two steps: first picking $V$ to determine the projection lengths of the vectors in each block, and then separately measuring each block, in effect independent coin tosses with the probability given by the squared projection lengths. By Fact~\ref{fact:jl} we have that each projection length is sharply concentrated around $a/n$, and since $a\geq10\log n/(\epsilon^6\delta^6)$ and the union bound {\em all} of the squared projection lengths are between $(1-\epsilon^3\delta^3)a/n$ and $(1+\epsilon^3\delta^3)a/n$ with high probability.
Hence the resulting distribution on blocks chosen is close to the one where each block is chosen independently with probability exactly $a/n$, and we can assume that blocks are chosen as in the latter instead.

We will now show that after the measurement (and resulting projection on $V$ and re-normalisation), states $\sigma_j$ that are far from $|\phi\rangle$ lead to states that are far from $|\phi_V\rangle$.

Write $\sigma_j=|\psi_j\rangle\langle\psi_j|$ and denote $||\sigma_j-\phi||_t=\gamma_j$. Our assumption is that $E_j[\gamma_j]=\gamma\geq\epsilon\delta/2$. We have
\begin{eqnarray*}
1-\gamma_j^2/4&=&1-||\phi-\sigma_j||_t^2/4\\
&\geq &F^2(\phi,\sigma_j)\\
&=&\langle\phi|\psi_j\rangle^2\\
&=&(1-||\, |\phi\rangle-|\psi_j\rangle\,||^2/2)^2\\
&\geq&1- ||\,|\phi\rangle-|\psi_j\rangle\,||^2.
\end{eqnarray*}

Consequently $||\,|\phi\rangle-|\psi_j\rangle\,||^2\geq\gamma_j^2/4$.

Denote $|\psi_{j,V}\rangle=\mbox{Proj}_V |\psi_j\rangle/||\,\mbox{Proj}_V|\psi_j\rangle||$ and $\ell_j=||\,\mbox{Proj}_V|\psi_j\rangle||$
and $\ell=||\,\mbox{Proj}_V||\phi\rangle||$.
Consider the vector $u_j=|\phi\rangle-|\psi_j\rangle$ and $w_j=\mbox{Proj}_V(u_j)$.
By Fact~\ref{fact:jl} $E[||w_j||^2]=a/n\cdot ||u_j||^2 $ and the distribution is tightly concentrated around its mean.
In particular the probability that $||w_j||^2\leq (a/n)\gamma^2_j/8$ is at most $e^{-a/16}\leq n^{-1/(2\epsilon^6\delta^6)}$.
By the union bound we can easily suppress the probability that this happens for any $j$.

We are interested in the distribution of $||\,|\psi_{j,V}\rangle-|\phi_V\rangle\,||^2=
||\,\mbox{Proj}_V|\psi_j\rangle/\ell_j-\mbox{Proj}_V||\phi\rangle/\ell||^2$.
We have already argued that $|\ell^2-a/n|,|\ell_j^2-a/n|\leq\epsilon^3\delta^3 (a/n)$, and hence
$||\,|\psi_{j,V}\rangle-|\phi_V\rangle\,||^2\geq ||w_j||^2\cdot (n/a)-2\epsilon^3\delta^3\geq\gamma_j^2/8-2\epsilon^3\delta^3$ with high probability for all $j$.

This means the projected and re-normalised vectors are still quite far apart in the squared euclidean distance, and hence their fidelity is bounded away from 1, and the swap test will fail with good probability, making the overall test fail with high probability.

In particular we have that for all $j$ the fidelity $F^2(\psi_{j,V},\phi)\leq 1-\gamma_j^2/16+2\epsilon^3\delta^3$, and the \ord[j] swap test succeeds with probability at most $1/2+(1-\gamma_j^2/16+2\epsilon^3\delta^3)/2\leq 1-\gamma^2_j/32+\epsilon^3\delta^3$. Hence the total acceptance probability is at most
$\prod_{j=1}^k (1-\gamma^2_j/32+\epsilon^3\delta^3)\leq(1-\gamma^2/32+\epsilon^3\delta^3)^k\leq(1-\epsilon^2\delta^2/100)^k$ by the arithmetic-geometric mean inequality. This probability is bounded by $\delta/4$ as long as $\delta<1/3$.

So the total probability that either not enough blocks of high distance are chosen or that the swap-tests succeed is at most $\delta/2$.

Finally, let us state the following result by Newman~\cite{newman:random}, which allows us to remove the shared randomness between Alice and the referee at the expense using/sending $O(\log n)$ private random bits in addition to her message.

\begin{fact}
In a randomised protocol using public coin protocol computing a function $f$ on $n$ input
bits for each player needs only $\log n+O(1)$ random bits.
\end{fact}

In fact we first need to make the random choice discrete (by using an $\epsilon$-net), and then apply the above fact. Even though our protocol does not exactly compute a function (for a given input there are many correct outputs), the Chernoff bound arguments in Newman's proof go through in our situation.

\end{proof}

\subsection{A lower bound}

We now observe that our protocol is optimal within lower order terms.

\begin{theorem}
Any protocol for UQST in which Alice sends $u$ bits and Merlin $v$ qubits on states on $\log n$ qubits must satisfy $u(v+\log n)\geq\Omega(n)$.
\end{theorem}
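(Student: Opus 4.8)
The plan is to derive the bound by reduction from the lower bound for \EQ\ in the QRQ model (\theoref{the:QRQeq}). Given any $(\epsilon,\delta)$-successful UQST protocol $\mathcal{U}$ (for small enough absolute constants $\epsilon,\delta$) in which Alice sends $u$ bits and Merlin $v$ qubits on states on $\log n$ qubits, I would build a QRQ protocol for \EQ\ on $n'$-bit inputs, with $n'=\asT{n}$, in which Alice sends $\asO{\log n}$ qubits, Bob $\asO{u}$ bits and Merlin $\asO{v}$ qubits. Since $n'=\asT{n}$, \theoref{the:QRQeq} then gives $\asO{u}\cdot(\asO{\log n}+\asO{v})\ge\asOm{n}$, i.e.\ $u(v+\log n)=\asOm{n}$, which is the claim.

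For the construction, fix the fingerprinting scheme of~\cite{BCWW01_Qu_Fi}: it assigns to each $x\in\01^{n'}$ a state $|\phi_x\rangle$ on $\log n$ qubits with $|\phi_x\rangle=|\phi_y\rangle$ when $x=y$ and $|\langle\phi_x|\phi_y\rangle|^2\le c$ for an absolute constant $c<1$ whenever $x\ne y$. The \EQ\ protocol runs $r=\asO{1}$ independent copies of the following in parallel (the constant $r$ is fixed below): Alice, on input $x$, sends $|\phi_x\rangle$; Bob, on input $y$, knows the classical description of $|\phi_y\rangle$ and so can play the trusted party of $\mathcal{U}$ on the ``secret state'' $|\phi_y\rangle$, sending her $u$-bit message; Merlin, who knows $y$, plays the untrusted party of $\mathcal{U}$ for $|\phi_y\rangle$, sending his $v$-qubit message. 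The referee runs $\mathcal{U}$'s referee on Bob's and Merlin's messages; if it rejects, he outputs ``$x\ne y$''; otherwise it hands him a state $\rho$ and he applies the swap-test to $\rho$ and Alice's $|\phi_x\rangle$. He outputs ``$x=y$'' iff in every one of the $r$ copies $\mathcal{U}$'s referee accepts and the swap-test accepts.

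To check correctness, note first that the swap-tests and the internal measurements of the $r$ copies of $\mathcal{U}$'s referee act on pairwise disjoint registers, so --- as in the claim used in the proof of the UQST protocol above --- Merlin's message may be assumed to be unentangled across the $r$ copies, and the copies behave as genuinely independent repetitions. If $x=y$: the honest Merlin of $\mathcal{U}$ causes $\mathcal{U}$'s referee to accept and to deliver a $\rho$ with $||\rho-|\phi_y\rangle\langle\phi_y|\,||_t\le\epsilon$, except with probability $\asO{\delta}$; using $|\phi_y\rangle=|\phi_x\rangle$ and the relation between trace distance and fidelity recalled above, $F^2(\phi_x,\rho)\ge1-\epsilon$, so by the swap-test Fact the test accepts with probability $\ge1-\epsilon/2$; hence all $r$ copies accept with probability $\ge1-\asO{r(\epsilon+\delta)}>2/3$. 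If $x\ne y$: for \emph{any} message from Merlin, in each copy either $\mathcal{U}$'s referee rejects (and we correctly output ``$x\ne y$'') or, except with probability $\delta$, $\rho$ is $\epsilon$-close to $|\phi_y\rangle\langle\phi_y|$; since $|\langle\phi_x|\phi_y\rangle|^2\le c$ this forces $F^2(\phi_x,\rho)\le c+\epsilon$, so that copy's swap-test accepts with probability at most $1/2+(c+\epsilon)/2$. By independence, the referee outputs ``$x=y$'' with probability at most $\asO{r\delta}+\left(\tfrac12+\tfrac{c+\epsilon}{2}\right)^{r}$, which is below $1/3$ once $r$ is a large enough constant and $\epsilon,\delta$ are small enough. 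This is a valid QRQ protocol for \EQ, and \theoref{the:QRQeq} completes the argument.

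The only step that requires real care is the soundness case, and the point there is exactly that the soundness guarantee of UQST pins the transferred state $\rho$ to the \emph{true} fingerprint $|\phi_y\rangle$ --- which is the ``wrong'' state for the swap-test against $|\phi_x\rangle$ when $x\ne y$ --- so a cheating Merlin cannot steer $\rho$ toward $|\phi_x\rangle$ without being caught by $\mathcal{U}$'s referee. The remaining ingredients are routine: replacing any shared Alice/referee randomness in $\mathcal{U}$ by private randomness of Bob (it may be folded into his message, as in the definition of UQST), amplifying the constant swap-test gap by parallel repetition, and choosing the absolute constants $r,\epsilon,\delta,c$ consistently.
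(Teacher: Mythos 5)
Your proposal is correct and follows essentially the same route as the paper: the paper's own (two-sentence) proof likewise converts the QQ fingerprinting protocol of~\cite{BCWW01_Qu_Fi} into a QRQ protocol for \EQ\ with message lengths $(\log n+O(1),O(u),O(v))$ by applying UQST to Bob's message, and then invokes the QRQ lower bound $b(a+m)\geq\Omega(n)$. You simply supply the completeness/soundness details (swap-test analysis and constant-round amplification) that the paper leaves implicit.
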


\begin{proof}
Assume there is a protocol for UQST with message lengths $u,v$.
As shown in Theorem~3 any QRQ protocol for \EQ\  must satisfy $b(a+m)\geq\Omega(n)$. There is a QQ $(a,b)$-protocol for \EQ\ (see~\cite{BCWW01_Qu_Fi}) with $a=b=\log n+O(1)$, which  can be turned into a QRQ protocol with message lengths $(\log n+O(1), O(u),O(v))$ by applying UQST to the message of Bob, making Bob randomised and introducing Merlin. Hence $u(v\log n)\geq\Omega(n)$.
\end{proof}

\section{Tightness for RRQ Protocols}\label{rrq_protocol_sect}

We sketch a protocol for the RRQ situation now. We assume for now that Alice and the referee, as well as Bob and the referee share a public coin (Alice and Bob do not). Alice and the referee pick a random $a$-dimensional subspace, and Alice sends the projection of the quantum fingerprint for $x$ onto the space to the referee. Bob and the referee do the same. The projection is sent using $O(a\log a)$ bits. Honest Merlin sends $O(n/a)$ copies of the quantum fingerprint of $x$ to the referee, in tensor product. The referee chooses half of these copies and measures them with the subspace agreed on with Alice, and the other half with Bob's. We expect that at least a constant number of copies `survive' the measurement in both cases. The referee then measures them to check them against the transmitted states. If any of these measurements fail, he rejects. Clearly, with similar arguments as above, in the case of $x=y$ the honest Merlin will convince the referee with high probability. If $x\neq y$, again Merlin cannot improve his chance of convincing the referee by sending something other than a product state. The referee chooses random copies to do the projection measurements, so Merlin can not predict which copy is supposed to be the fingerprint of $x$ or of $y$.

\section{An RRR Protocol for Disjointness}\label{disj_sect}

In this section we make an observation regarding the Disjointness problem, which, while trivially hard in most versions of the SMP model, becomes more interesting with the addition of the prover Merlin. Specifically, we show that there is a non-deterministic SMP protocol with total cost $a+b+m=O(n^{2/3}\log n)$, by adapting an $O(\sqrt{n}\log n)$-cost protocol of Aaronson and Wigderson \cite{AW09} for the Disjointness and Inner Product problems in a related model. The increase in the communication is due to the need to avoid introducing shared randomness between the players.

For the lower bound we only have the $\Omega(\sqrt{n})$ bound which holds trivially because $MA(DISJ)=\Omega(\sqrt{n})$ and every non-deterministic SMP protocol can be converted into an MA-protocol by having Alice and Bob each play the role of the referee for one another. However, given the weakness of the remaining interaction once a proof is fixed, it would not be surprising for the true lower bound to be higher.

\begin{theorem} For any $a$ such that $\sqrt{n}\leq a\leq n^{3/4}$, there is a non-deterministic SMP protocol for $DISJ$ with cost $(a\log n,a\log n,\left(\frac{n}{a}\right)^{2}\log n)$ (up to constants).
\end{theorem}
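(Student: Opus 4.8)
The plan is to adapt the $O(\sqrt n\log n)$-cost Merlin--Arthur protocol for $DISJ$ of Aaronson and Wigderson~\cite{AW09}. That protocol reduces disjointness to checking a univariate polynomial identity at a single random point, which Alice and Bob must both evaluate at the \emph{same} place; in the SMP model no randomness is shared, so the idea is to let Alice and Bob choose \emph{independent} random points and to have Merlin supply a \emph{bivariate} certificate linking the two evaluations. Padding $n$ if necessary so that $a\mid n$, set $m=n/a$ and fix a prime $q$ with $n<q\leq 2n$ (so $\log q=O(\log n)$, by Bertrand's postulate). Relabel the $n$ coordinates as an $a\times m$ grid, so that Alice's and Bob's inputs become matrices $(x_{ij})$ and $(y_{ij})$ with $i\in[a]$, $j\in[m]$. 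For each row $i$ let $A_i,B_i\in\mathbb{F}_q[t]$ be the interpolating polynomials of degree less than $m$ with $A_i(j)=x_{ij}$ and $B_i(j)=y_{ij}$ for $j\in[m]$, and put
\[P^{*}(s,t)=\sum_{i=1}^{a}A_i(s)B_i(t),\]
a polynomial of individual degree less than $m$ in each variable with $\sum_{j=1}^{m}P^{*}(j,j)=\sum_{i,j}x_{ij}y_{ij}=|S\cap T|$, which is $0$ exactly when the two sets are disjoint.

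The protocol is then: Alice picks $t^{*}\in\mathbb{F}_q$ uniformly at random and sends $(t^{*},A_1(t^{*}),\dots,A_a(t^{*}))$; independently, Bob picks $u^{*}\in\mathbb{F}_q$ uniformly and sends $(u^{*},B_1(u^{*}),\dots,B_a(u^{*}))$; Merlin sends the $m^{2}$ coefficients of a claimed bivariate polynomial $P$ of individual degree less than $m$ in each variable. The referee (deterministically) accepts iff (i) $\sum_{j=1}^{m}P(j,j)\equiv 0\pmod q$, and (ii) $P(t^{*},u^{*})=\sum_{i=1}^{a}A_i(t^{*})B_i(u^{*})$. Completeness is immediate: if the sets are disjoint, honest Merlin sends $P=P^{*}$; then test (i) holds because $|S\cap T|=0$, and test (ii) holds as a polynomial identity, so the referee accepts with probability $1$. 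The message lengths are $O(a\log q)=O(a\log n)$ for each of Alice and Bob and $O(m^{2}\log q)=O((n/a)^{2}\log n)$ for Merlin, as claimed; the hypothesis $\sqrt n\leq a\leq n^{3/4}$ is the regime of interest --- it contains the balanced choice $a=n^{2/3}$, which gives total cost $O(n^{2/3}\log n)$, while for smaller or larger $a$ the cost is dominated by a single party --- and it is also what makes the soundness error $o(1)$ with no amplification.

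For soundness, suppose the sets intersect, so $\sum_{j}P^{*}(j,j)=|S\cap T|$ is an integer in $\{1,\dots,n\}$ and hence $\not\equiv 0\pmod q$ since $q>n$: this is the reason for taking $q$ that large. Consequently any $P$ that passes test (i) must differ from $P^{*}$, so $D:=P-P^{*}$ is a nonzero bivariate polynomial of individual degree less than $m$ in each variable, hence of total degree at most $2(m-1)$. Merlin sees only $x$ and $y$, not the players' private coins, so $P$ (and therefore $D$) is fixed relative to $(t^{*},u^{*})$; and since Alice's and Bob's coins are independent and uniform, the pair $(t^{*},u^{*})$ is uniformly distributed over $\mathbb{F}_q^{2}$. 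By the Schwartz--Zippel lemma, test (ii) then passes with probability at most $2m/q<2m/n=2/a\leq 2/\sqrt n=o(1)$, so the protocol errs with probability $o(1)$ in the intersecting case. I expect this soundness step to be the main obstacle --- not because the estimate is hard, but because it is precisely where giving up shared randomness costs something: one cannot collapse to a univariate identity check (which would need only an $O(m\log n)$-bit certificate from Merlin), so the certificate must grow to $m^{2}$ coefficients, and the point that must be checked with care is that the independence of the two players' coins genuinely produces a uniformly random point of $\mathbb{F}_q^{2}$ against which the input-dependent but coin-oblivious polynomial $D$ is tested. It then only remains to observe that all computations make sense over $\mathbb{F}_q$ --- the $m\leq n<q$ interpolation nodes and the two random evaluation points all lie in $\mathbb{F}_q$ --- which finishes the proof.
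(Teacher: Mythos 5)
Your protocol is correct and achieves the stated $(a\log n,\,a\log n,\,(n/a)^2\log n)$ trade-off, but it is genuinely different from the paper's. The paper keeps Merlin's certificate \emph{univariate} (the polynomial $s(i)=\sum_j \tilde a(i,j)\tilde b(i,j)$ of degree $O(n^{\alpha})=O((n/a)^2)$, exactly as in Aaronson--Wigderson) and handles the lack of shared randomness by a birthday argument: Alice and Bob each send blocks for $O(\sqrt{|S|})$ independently chosen evaluation points so that with high probability they collide on a common $r$, at which the referee can compute $s(r)$ and compare with $s'(r)$. You instead keep each trusted player's message to a \emph{single} block of evaluations at one private random point and push the cost onto Merlin by making the certificate \emph{bivariate}: the pair of independent private coins $(t^*,u^*)$ is a uniform point of $\mathbb{F}_q^2$, and Schwartz--Zippel applied to $P-P^*$ gives soundness $O(m/q)=O(1/a)$. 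The two blow-ups land on the same trade-off curve (a factor $\sqrt{|S|}$ on Alice's and Bob's side in the paper versus a quadratic blow-up of Merlin's certificate in yours), but your version has some advantages: perfect completeness (no collision event to account for, and no need to specify what the referee does when Alice's and Bob's sample sets are disjoint), and a one-line soundness bound. The paper's version has the mild advantage that Merlin's message is the same univariate object as in the original MA protocol, which makes the adaptation more transparent. Both are valid proofs of the theorem.
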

\begin{proof}
As in the protocol of Aaronson and Wigderson \cite{AW09}, we divide the set $[n]$ into $n^{\alpha}$ blocks of size $n^{1-\alpha}$, with $\alpha\geq1/2$, and write the inputs $x$ and $y$ respectively as functions $a,b:[n^{\alpha}]\times [n^{1-\alpha}]\rightarrow\{0,1\}$ such that $x_{n^{1-\alpha}(i-1)+j}=a(i,j)$ and $y_{n^{1-\alpha}(i-1)+j}=b(i,j)$. The problem then is to accept with high probability if $\sum_{i\in[n^{\alpha}]}\sum_{j\in[n^{1-\alpha}]}a(i,j)b(i,j)=0$ and reject with high probability otherwise. Choosing a prime $q\in (n,2n]$, the functions $a$ and $b$ can then be extended uniquely to two-variable polynomials $\tilde{a},\tilde{b}:\mathbb{F}_{q}^{2}\rightarrow \mathbb{F}_{q}$ which have total degree $\leq O(n^{\alpha})$. We also define the $O(n^{\alpha})$-degree polynomial $s(i)=\sum_{j\in[n^{1-\alpha}]}\tilde{a}(i,j)\tilde{b}(i,j)\mod q$.

For the protocol, Merlin sends a candidate polynomial $s'$ of degree equal to that of $s$ to the referee. Now if Alice and Bob could each send for some randomly chosen value $r$ the values $\tilde{a}(r,j)$ and $\tilde{b}(r,j)$ (respectively) for all $j\in[n^{1-\alpha}]$, as well as the value $r$, then the referee could compute $s(r)$ and test it against $s'(r)$: If the size of the set from which $r$ is chosen is sufficiently large, and $s\neq s'$, then with sufficiently high probability $s(r)$ would differ from $s'(r)$, and the referee could reject (otherwise he would assume that $s'=s$ and use $s'$ to decide whether to accept or reject). Unfortunately, the random choice of $r$, which is necessary in order to keep it hidden from Merlin, also keeps it hidden from the player who is not making the choice. Thus, if Bob chooses $r$, the only way for Alice to know which block of values $\tilde{a}(i,j)$ to send, is by either communicating or having shared randomness with Bob.

This issue can of course be overcome by having each player send the blocks of $n^{1-\alpha}$ values $\tilde{a}(r,j)$ (resp. $\tilde{b}(r,j)$), together with $r$, for sufficiently many values of $r$ as to have a collision with high probability. In order to do this efficiently, we need to make the set from which $r$ is chosen as small as possible, while keeping it large enough to be able to distinguish $s$ from $s'$ reliably if they differ. For the Schwartz-Zippel Lemma it suffices for the set $S$ from which $r$ is chosen to be of size, say, $10\deg(s)=O(n^{\alpha})$, while Alice and Bob each choose $100\sqrt{|S|}=O(n^{\alpha/2})$ values of $r$.

What remains is to analyse the cost of this protocol. Merlin uses $O(n^{\alpha}\log n)$ bits to send the coefficients of the polynomial $s'$. Alice and Bob send $O(n^{\alpha/2})$ blocks of $O(n^{1-\alpha})$ values in $[q]$ each, which costs $O(n^{1-\alpha/2}\log n)$ bits of communication. Setting $a=n^{1-\alpha/2}$, we get the statement of the theorem. For $\alpha=2/3$ the message length of each party is the same up to a constant.
\end{proof}

{}
\subsection{Discussion}
{}

The main purpose of this work was to address what we viewed as remaining gaps in the understanding of one of the most basic communication primitives -- the Equality function.
We have developed a new technique for analysing \EQ.
Compared to previously used methods, it has the following advantages:\itemi{
\item It seems to be more widely applicable, as it allowed us to show \asOm{\sqrt n} lower bound for both \EQ\ and its negation \NE\ in the \e{non-deterministic} version of quantum-classical SMP, where Merlin is also quantum.
This is the strongest known version of SMP where the complexity of both \EQ\ and \NE\ remains high, and the previously known lower bound techniques seemed to be insufficient for showing that.\fn
{The best lower bound that we were able to prove using combinations of known techniques is $\tilde{\Omega}(n^{1/3})$.}
\item It provides a unified view upon the complexity of \EQ\ in all the versions of SMP mentioned above.
Moreover, it simplifies some previously known bounds (even for the best-understood case of \EQ\ in the randomised SMP, the new proof is, arguably, the simplest known).
}

Additionally, we showed that there is an RRR protocol for Disjointness with communication $O(n^{2/3}\log n)$ from Alice, Bob, and Merlin.
We conjecture this to be tight up to the log-factor.

{}

{}

\end{document}